\documentclass[aps,pra,twocolumn,amsmath,amssymb,floatfix]{revtex4-2}
\usepackage{graphicx}
\usepackage{physics}
\usepackage{xcolor}
\usepackage{amsthm}
\usepackage{algpseudocode}
\usepackage{pgfplots}
\pgfplotsset{compat=1.18}

\newtheorem{theorem}{Theorem}
\newtheorem{lemma}[theorem]{Lemma}
\newtheorem{proposition}[theorem]{Proposition}
\newtheorem{corollary}[theorem]{Corollary}
\theoremstyle{definition}
\newtheorem{protocol}{Protocol}

\newcounter{algorithm}
\renewcommand{\thealgorithm}{\arabic{algorithm}}

\begin{document}

\title{Assisted quantum teleportation}

\author{Mithilesh Kumar}
\email{mithilesh.kumar@krea.edu.in}
\author{Kaavya Iyer}
\email{kaavya_iyer.sias23@krea.ac.in}
\affiliation{Krea University, Sri City, 517646, India}

\begin{abstract}
Teleportation through a non-maximally entangled pair, e.g., $\ket{\psi(\theta)}_{AB}=\cos\theta\ket{00}+\sin\theta\ket{11}$, induces a noisy channel and cannot achieve deterministic unit-fidelity transmission unless $\theta=\pi/4$. We introduce a framework of \emph{assisted quantum teleportation} in which a third party (the ``Bank'') supplies auxiliary multipartite entanglement to restore a perfect Bell pair on the original $AB$ registers. We analyze two operational roles for the Bank: a Bank-measures model (measurement and broadcast) and a transfer model (the Bank transfers its subsystem and then leaves). For GHZ-class and W-class assistance we derive explicit feasibility regions for deterministic restoration and show an operational inequivalence for W resources. We further characterize finite-shot optimal success probabilities for probabilistic restoration and formulate Bank-measures feasibility for general pure Bank resources as a minimax optimization.
\end{abstract}

\maketitle

\section{Introduction}

Quantum teleportation is a cornerstone of quantum information theory, enabling the transfer of an unknown quantum state between distant parties using a shared entangled resource and classical communication \cite{bennett1993teleporting}. In the standard protocol, Alice and Bob share a maximally entangled two-qubit state, such as the Bell state $\ket{\Phi^+}=\tfrac{1}{\sqrt{2}}(\ket{00}+\ket{11})$. A third-party controller can be granted authority to enable or block teleportation by holding a share of a tripartite resource (controlled teleportation) \cite{karlsson1998}. In that setting the underlying Alice--Bob link is typically assumed ideal, and the controller functions as a switch.

As the field advances toward the quantum internet \cite{wehner2018quantum}, realistic distributed entanglement will be imperfect and often generated via intermediate network elements (e.g., repeaters and swapping-based architectures) \cite{briegel1998quantum,zukowski1993event}. This motivates the complementary setting we study: the primary Alice--Bob channel is non-ideal, and a centralized node (the ``Bank'') provides additional resources to ``repair'' the link.

A basic resource-theoretic question is whether perfect teleportation can be achieved starting from a sub-optimal shared pair, specifically a non-maximally entangled two-qubit state
\begin{equation}\label{eq:betatheta}
    \ket{\psi(\theta)}=\cos\theta\ket{00}+\sin\theta\ket{11}
\end{equation}
with $\theta\in(0,\pi/2)$. Unless $\theta=\pi/4$, deterministic unit-fidelity teleportation is impossible \cite{bennett1996concentrating,horodecki1999general}: for a general two-qubit resource $\rho_{AB}$ the optimal fidelity is bounded by its maximal singlet fraction $F$ as $\mathcal{F}_{\max}=(2F+1)/3$ \cite{horodecki1999general}, and $F=1$ occurs only for states locally equivalent to a Bell pair \cite{verstraete2003optimal}.

We revisit this limitation from a geometric viewpoint. Single-qubit channels act affinely on the Bloch ball, mapping the unit sphere to an ellipsoid \cite{NielsenChuang2000,ruskai2002analysis}. In particular, the standard Bell-measurement teleportation protocol with a non-maximally entangled pure resource deforms the set of teleportable states from a sphere to a prolate spheroid \cite{bowen2001teleportation}, providing a transparent obstruction to deterministic restoration of spherical symmetry.

To overcome this limitation, we establish a framework of \emph{assisted quantum teleportation} in which a Bank provides auxiliary multipartite entanglement to enable Bell-pair restoration on the original registers. We show that the answer can depend on the Bank's operational role and complement deterministic feasibility with finite-shot optimal success probabilities when restoration must be probabilistic.

\paragraph{Relation to assisted entanglement.}
Our setting is conceptually close to ``assistance'' notions in multipartite entanglement theory, where a helper performs local measurements to maximize the entanglement obtainable between two designated parties (entanglement of assistance) \cite{divincenzo1998assistance}, and to network notions such as localizable entanglement \cite{popp2005localizable}. We differ in two operational respects that are central to the present results: (i) the target is a \emph{specific} repaired edge $AB$ (a perfect Bell pair on the original registers), and (ii) we compare two natural Bank roles---the Bank-measures model versus the transfer model---and show they can be inequivalent for W-class resources. We focus on finite-shot (single-copy) feasibility and success probabilities, complementing asymptotic assisted-distillation settings \cite{yard2011assisted}.

\subsection*{Operational models}
We study a fixed imperfect Alice--Bob link $AB$ together with an auxiliary tripartite resource on registers $A'B'K$ initially shared between Alice, Bob, and the Bank. Unless stated otherwise, Alice and Bob may perform arbitrary LOCC on their local holdings, and the Bank participates only through one of the following operational roles:
(i) \emph{Bank-measures model:} the Bank measures $K$ and broadcasts a classical outcome; or
(ii) \emph{transfer model:} the Bank transfers $K$ to one end node (Alice), after which the Bank is absent and Alice and Bob act by LOCC.
Our success criterion is deterministic production of a perfect Bell pair on the original $AB$ registers.

\subsection*{Contributions}
\begin{enumerate}
\item We give a geometric formulation of the impossibility of deterministic unit-fidelity teleportation through $\ket{\psi(\theta)}$ by identifying the induced Bloch-ball contraction to a prolate spheroid.
\item For GHZ-class and W-class Bank resources we derive explicit feasibility conditions for deterministic Bell-pair restoration, and we show an operational inequivalence for W-class resources between the Bank-measures and transfer models.
\item We characterize finite-shot optimal success probabilities for probabilistic Bell-pair restoration and illustrate the resulting performance curves.
\item For general pure Bank resources, we formulate Bank-measures feasibility as a minimax optimization over measurements on $K$, enabling quantitative resource-tradeoff questions.
\end{enumerate}

\section{Entanglement must be consumed for teleportation}
It is well understood that the standard teleportation protocol consumes entanglement in the shared Bell pair \cite{bennett1993teleporting,NielsenChuang2000}. The question we address is whether every teleportation protocol must consume entanglement in the shared resource. The necessity of entanglement consumption in any teleportation protocol can be established by considering the monotonicity of entanglement under local operations and classical communication (LOCC) \cite{vidal1999entanglement,plenio2007introduction,horodecki2009entanglement}. To demonstrate this without assuming a specific protocol or maximal entanglement, we define a setup involving three parties: a reference system $R$, a source qubit $Q$, and a shared resource pair $AB$ distributed between Alice and Bob. Initially, Alice holds $R, Q$, and $A$, while Bob holds $B$ and a target ancilla $T$. Let $E(\rho_{\mathcal{A}|\mathcal{B}})$ denote a valid entanglement measure across the Alice-Bob partition. In the initial state, the total entanglement across the partition is determined strictly by the pre-shared resource:
\begin{equation}E_{\text{initial}}(\mathcal{A}|\mathcal{B}) = E(\rho_{AB})\end{equation}
\begin{figure*}
    \includegraphics[width=0.7\textwidth]{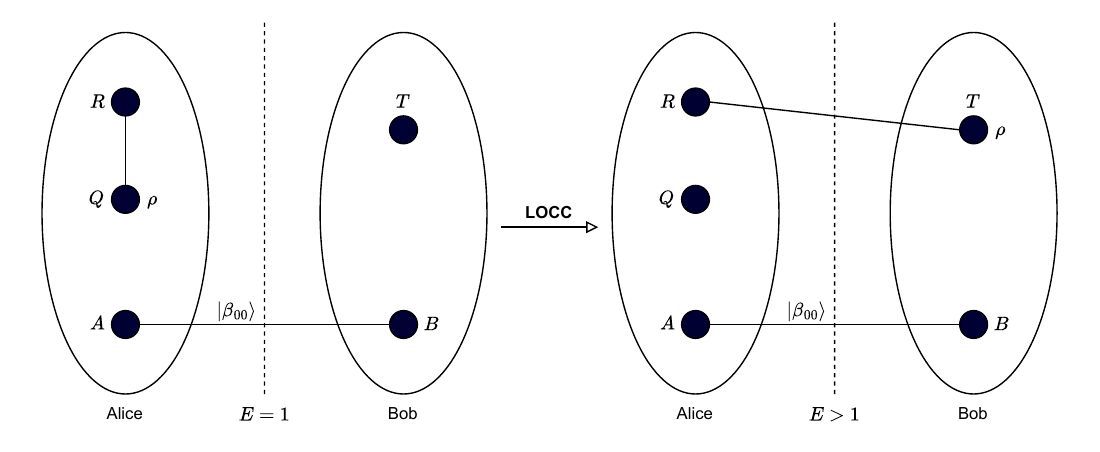}
    \caption{Schematic of the teleportation protocol with a reference system $R$, a source qubit $Q$, and a shared resource pair $AB$ distributed between Alice and Bob. Alice holds $R, Q$, and $A$, while Bob holds $B$ and a target ancilla $T$.}\label{fig:locc}   
\end{figure*}
Note that any entanglement between $R$ and $Q$ ($E_{RQ}$) is local to Alice and thus contributes zero to the cross-partition entanglement. The objective of any general teleportation protocol is to map the state of $Q$ onto $T$. If $Q$ is initially correlated with $R$, a successful unit-fidelity transfer implies that the final state must satisfy $E(\rho_{RT}) = E(\rho_{RQ})$. Suppose there exists a hypothetical protocol that preserves or restores the resource entanglement $E(\rho_{AB})$ while successfully completing the state transfer. In this scenario, the final entanglement across the partition would be:
\begin{equation}E_{\text{final}}(\mathcal{A}|\mathcal{B}) = E(\rho_{AB}) + E(\rho_{RT}) = E(\rho_{AB}) + E(\rho_{RQ})\end{equation}
According to the principle of entanglement monotonicity, entanglement cannot increase under LOCC, requiring $E_{\text{final}} \le E_{\text{initial}}$ \cite{vidal1999entanglement,horodecki2009entanglement}. Applying this to our persistent-resource assumption yields:
\begin{equation}E(\rho_{AB}) + E(\rho_{RQ}) \le E(\rho_{AB})\end{equation}
For any $E(\rho_{RQ}) > 0$, this leads to a direct contradiction. Consequently, the shared entanglement $E(\rho_{AB})$ must be reduced by at least the amount of entanglement transferred to Bob \cite{horodecki2009entanglement,plenio2007introduction}. 

\section{Background: Impossibility of Perfect Teleportation via non-ideal resource}
Standard protocols assume access to an ideal Bell pair. A natural question arises: is it possible to achieve perfect teleportation using a non-ideal resource like $\ket{\psi(\theta)}$? It is well established that deterministic unit-fidelity teleportation is impossible without a maximally entangled resource \cite{horodecki1999general,verstraete2003optimal} (though probabilistic schemes exist \cite{agrawal2002probabilistic}). As this work's primary concern, we merely present a geometric formulation.


The state of a single qubit can be represented by a Bloch vector $\vec v\in\mathbb{R}^3$ via
\begin{equation}
\rho = \frac{1}{2}\bigl(\mathbb{I}+\vec v\cdot\vec\sigma\bigr)
\end{equation}
where $\vec\sigma=(\sigma_x,\sigma_y,\sigma_z)$. Completely positive trace-preserving (CPTP) maps on a single qubit act affinely on the Bloch ball, sending the unit sphere to an ellipsoid (possibly shifted) inside the ball \cite{NielsenChuang2000,ruskai2002analysis}.

In the standard teleportation protocol, using a shared resource state $\rho_{AB}$ and the usual Bell measurement / Pauli correction, Bob receives an effective single-qubit channel $\mathcal{E}_{\rho_{AB}}$ acting on the input state \cite{horodecki1999general,bowen2001teleportation}. In particular, for Bell-diagonal correlations the action on Bloch vectors is linear, $\vec v_{\rm out}=T\,\vec v_{\rm in}$, where $T_{ij}=\mathrm{Tr}(\rho_{AB}\,\sigma_i\otimes\sigma_j)$.

For the non-maximally entangled pure resource $\ket{\psi(\theta)}=\cos\theta\ket{00}+\sin\theta\ket{11}$ one finds
\begin{equation}
T(\theta)=\mathrm{diag}\bigl(\sin(2\theta),\,-\sin(2\theta),\,1\bigr)
\end{equation}
and Bob's Pauli corrections flip the sign of the $y$-axis, yielding the effective contraction
\begin{equation}
\tilde T(\theta)=\mathrm{diag}\bigl(\sin(2\theta),\,\sin(2\theta),\,1\bigr)
\end{equation}
Hence the image of the Bloch ball is the prolate spheroid
\begin{equation}
\frac{(v'_x)^2}{\sin^2(2\theta)}+\frac{(v'_y)^2}{\sin^2(2\theta)}+(v'_z)^2\le 1
\end{equation}
with equatorial semi-axes $a=b=\sin(2\theta)$ and polar semi-axis $c=1$.

\begin{theorem}\label{thm:geometric_impossibility}
For $\theta\neq \pi/4$, perfect deterministic teleportation of an arbitrary qubit state is impossible using $\ket{\psi(\theta)}_{AB}$ as the only shared bipartite resource.
\end{theorem}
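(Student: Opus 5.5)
The proof has two layers. First, the geometric argument for the standard protocol. Second, a protocol-independent argument ruling out every LOCC teleportation scheme that uses $\ket{\psi(\theta)}_{AB}$ as the only entangled resource.

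\emph{Standard protocol.} The effective channel has Bloch action $\vec v_{\rm out}=\tilde T(\theta)\vec v_{\rm in}$ with $\tilde T(\theta)=\mathrm{diag}(\sin 2\theta,\sin 2\theta,1)$. For unit fidelity on every input, the output must equal the input for every pure state. That would require $\tilde T(\theta)$ to map the unit sphere onto itself, i.e., to be orthogonal. Take an equatorial pure state, say $\vec v_{\rm in}=(1,0,0)$. It is mapped to $(\sin 2\theta,0,0)$, whose length is $|\sin 2\theta|<1$ whenever $\theta\neq\pi/4$ and $\theta\in(0,\pi/2)$. The output is therefore mixed, and its fidelity with the input is $(1+\sin 2\theta)/2<1$. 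So the prolate spheroid is strictly contained in the ball, and deterministic perfect transfer fails for this protocol. The same computation applies after any fixed local unitary pre- or post-processing, because such processing only composes $\tilde T(\theta)$ with rotations. Rotations cannot restore the lengths of the contracted equatorial vectors.

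\emph{General protocols.} A general protocol may use LOCC before and after the measurement, so contraction of the standard protocol does not settle the matter on its own. For this I would use the reference-system setup of the previous section. Let $Q$ be maximally entangled with $R$. If some LOCC protocol teleported every qubit state perfectly, then by linearity it would map $\ket{\Phi^+}_{RQ}$ to $\ket{\Phi^+}_{RT}$ with $R$ untouched. Alice and Bob would thus have produced, by LOCC from $\ket{\psi(\theta)}_{AB}$, a maximally entangled state across the Alice--Bob cut: $R$ stays with Alice and $T$ is held by Bob. Now apply monotonicity of the entanglement entropy, invoking the monotonicity principle used earlier \cite{vidal1999entanglement}. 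The initial entanglement is $E(\psi(\theta))=h(\cos^2\theta)$, where $h$ is the binary entropy. The final entanglement is at least $1$ ebit. Since $h(\cos^2\theta)<1$ for $\theta\neq\pi/4$, the process would increase entanglement, a contradiction. The statement specifies deterministic teleportation, so this is monotonicity on average. Equivalently, Nielsen's majorization criterion forbids the deterministic conversion $\ket{\psi(\theta)}\to\ket{\Phi^+}$, since $\max(\cos^2\theta,\sin^2\theta)>1/2$.

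\emph{Main obstacle.} The delicate step is the reduction from "teleports every input state" to "deterministically produces $\ket{\Phi^+}_{RT}$". One must argue that a protocol correct on all pure inputs acts as the identity channel $Q\to T$. Then, by linearity and complete positivity, it is also correct when $Q$ is entangled with an untouched reference. This follows because the protocol's overall action on $Q$ is a fixed CPTP map (outcomes are averaged after corrections) that fixes all pure states, hence equals the identity. The rest is the standard monotonicity argument.
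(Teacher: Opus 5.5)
Your proposal is correct, and its second layer goes beyond what the paper's proof actually establishes.

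The paper argues only about the standard Bell-measurement protocol. Perfect teleportation would require a unitary, hence Bloch-ball-bijective, channel, while $\tilde T(\theta)=\mathrm{diag}(\sin 2\theta,\sin 2\theta,1)$ shrinks the ball's volume by $\det\tilde T(\theta)=\sin^2 2\theta<1$. Your first layer is the same geometric idea stated pointwise: an equatorial pure state is sent to a mixed state with fidelity $(1+\sin 2\theta)/2$. This is equivalent to the paper's volume argument and slightly more quantitative.

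The real difference is your protocol-independent layer. The paper's proof does not by itself exclude non-standard schemes, such as LOCC pre-processing of the pair, generalized measurements, or outcome-dependent non-Pauli corrections. For those it relies only on the cited singlet-fraction remark. Your reduction covers every LOCC protocol:
\begin{itemize}
\item a CPTP map $Q\to T$ that fixes all pure states is the identity;
\item applied to half of $\ket{\Phi^+}_{RQ}$, it yields $\ket{\Phi^+}_{RT}$ across the Alice--Bob cut;
\item this contradicts Nielsen's criterion (equivalently Lemma~\ref{lem:largest_schmidt_obstruction}), or average monotonicity of entanglement entropy, since $\max(\cos^2\theta,\sin^2\theta)>1/2$.
\end{itemize}
In effect you merge the Section~II reference-system setup with Nielsen's theorem.

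What the paper's route buys is the explicit prolate-spheroid picture that motivates its later channel-geometry extensions. What yours buys is a proof of the theorem as actually stated.

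One step deserves to be written out. The averaged $RT$ marginal is the pure state $\ket{\Phi^+}$, and pure states are extreme points. So after fine-graining to rank-one Kraus operators, every branch must be $\ket{\Phi^+}_{RT}$ times a pure junk state. Each branch then carries at least one ebit across the cut, and the average-monotonicity bound (or the ensemble form of the majorization criterion) applies directly.
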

\begin{proof}
Perfect deterministic teleportation of an arbitrary input qubit means that, for every input density operator $\rho$, Bob's output equals $\rho$ up to a fixed unitary correction. Equivalently, the induced single-qubit channel must be unitary, and therefore it must map the Bloch ball bijectively onto itself \cite{NielsenChuang2000,ruskai2002analysis}.

For the standard Bell-measurement protocol with resource $\ket{\psi(\theta)}$, the induced channel is unital and acts linearly on Bloch vectors as $\vec v_{\rm out}=\tilde T(\theta)\vec v_{\rm in}$ with $\tilde T(\theta)=\mathrm{diag}(\sin(2\theta),\sin(2\theta),1)$. Thus the image of the Bloch ball is an ellipsoid whose volume is scaled by $\det(\tilde T(\theta))$ relative to the Bloch-ball volume $4\pi/3$, namely
\begin{equation}
V(\theta)=\det\!\bigl(\tilde T(\theta)\bigr)\,\frac{4\pi}{3}=\frac{4\pi}{3}\sin^2(2\theta)
\end{equation}
For $\theta\neq\pi/4$ one has $\sin(2\theta)<1$, so $V(\theta)<4\pi/3$ and the channel is not bijective on the Bloch ball. Hence it cannot be unitary, and therefore cannot implement perfect deterministic teleportation for all input states.
\end{proof}

\paragraph{Remark (singlet-fraction bound).}
Independently of the geometric picture, the optimal deterministic teleportation fidelity achievable from a two-qubit resource $\rho_{AB}$ is bounded by its maximal singlet fraction $F$ as $\mathcal{F}_{\max}=(2F+1)/3$ \cite{horodecki1999general,verstraete2003optimal}. For $\ket{\psi(\theta)}$ with $\theta\neq\pi/4$ one has $F<1$, which also rules out unit-fidelity deterministic teleportation.

\section{Bank-Assisted Multi-Party Teleportation}

We now analyze how a Bank can \emph{restore a Bell pair on $AB$} by distributing auxiliary tripartite entanglement, in the Bank-measures and transfer models defined in the Introduction \cite{wehner2018quantum,briegel1998quantum,zukowski1993event}.

\begin{lemma}[General majorization obstruction]\label{lem:largest_schmidt_obstruction}
Consider any pure global state shared across the bipartition $\mathcal{A}\mid\mathcal{B}$, where $\mathcal{A}$ contains Alice's local registers and $\mathcal{B}$ contains Bob's. If Alice and Bob can deterministically produce a Bell pair on registers $AB$ by LOCC (possibly discarding ancillary systems), then the largest squared Schmidt coefficient of the initial state across $\mathcal{A}\mid\mathcal{B}$ must be at most $1/2$ \cite{nielsen1999conditions,marshall2011inequalities}.
\end{lemma}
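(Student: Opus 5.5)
The plan is to reduce the statement to Nielsen's majorization criterion for deterministic pure-state LOCC transformations \cite{nielsen1999conditions}, and then to a one-line comparison of largest eigenvalues. Write the initial global state as $\ket{\Psi}_{\mathcal{A}\mathcal{B}}$ with Schmidt coefficients $\lambda_1\ge\lambda_2\ge\cdots$ (squared, summing to one) across $\mathcal{A}\mid\mathcal{B}$. The target is a Bell pair $\ket{\Phi^+}_{AB}$ together with whatever remains on the ancillary registers.

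\textbf{Reducing to a pure-to-pure conversion.} Any deterministic LOCC protocol that outputs a Bell pair on $AB$ can be turned into one with a pure product output. Since $\ket{\Phi^+}_{AB}$ is pure, the final global state (conditioned on each branch of the protocol) must have the form $\ket{\Phi^+}_{AB}\otimes\sigma_{\rm rest}$. The rest can be locally discarded or reset to a fixed product state $\ket{0}_{\mathcal{A}'}\ket{0}_{\mathcal{B}'}$. Both operations are local, so the protocol realizes the deterministic conversion $\ket{\Psi}\to\ket{\Phi^+}_{AB}\otimes\ket{0}\ket{0}$.

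This target has squared Schmidt coefficients $(1/2,1/2,0,\dots)$ across $\mathcal{A}\mid\mathcal{B}$. The reduction requires ensuring that ``possibly discarding ancillary systems'' does not break purity of the target. Purity of the Bell pair forces the product structure, and local discarding and re-preparation are themselves LOCC, so the composite map is still a deterministic LOCC map between pure states.

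\textbf{Applying majorization.} With pure input and pure target, Nielsen's theorem says the conversion is possible deterministically iff $\lambda(\Psi)\prec\lambda(\Phi^+)$. In particular the first partial-sum inequality gives $\lambda_1\le 1/2$, which is exactly the claim.

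\textbf{Potential subtleties.} Nielsen's theorem is stated for finite-dimensional systems with equal dimensions. Padding the smaller side with zeros handles unequal dimensions. The monotonicity direction of the first partial sum could also be justified directly: the largest Schmidt coefficient cannot decrease on average under LOCC. The main obstacle is only the reduction step above, namely making sure that discarding ancillas is legitimate within the pure-state framework.
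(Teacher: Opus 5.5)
Your proposal is correct and takes the same route as the paper, which justifies the lemma only by citing Nielsen's majorization theorem: the first partial-sum inequality against the target spectrum $(1/2,1/2,0,\dots)$ gives $\lambda_{\max}\le 1/2$. Your step of discarding the leftover registers and resetting them to a fixed product state is a worthwhile addition. It justifies the target Schmidt vector $(1/2,1/2,0,\dots)$, which the paper simply assumes for $\ket{\Phi^+}_{AB}\otimes\ket{\mathrm{junk}}$ even though that vector is only correct when the junk is a product state across the cut.
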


\subsection{Generalized GHZ-class resource: deterministic compensation in the Bank-measures model}
\label{sec:ghz_assisted}

The Bank distributes a generalized GHZ-class state \cite{dur2000three}
\begin{equation}\label{eq:ghz-gen}
    \ket{\Phi}_{A'B'K}=\alpha\ket{000}+\beta\ket{111}
\end{equation}
with $\alpha\ge\beta>0$ and $\alpha^2+\beta^2=1$, sending $A'$ to Alice, $B'$ to Bob, and retaining $K$. In the Bank-measures model, the Bank measures $K$ and broadcasts one bit, selecting a known conditional state for Alice and Bob (cf. controlled teleportation) \cite{karlsson1998}.

\begin{protocol}[GHZ-assisted Bell-pair restoration in the Bank-measures model]\label{prot:ghz}
Define the global initial state
\begin{equation}
\ket{\Psi_0}=(\cos\theta\ket{00}+\sin\theta\ket{11})_{AB}\otimes(\alpha\ket{000}+\beta\ket{111})_{A'B'K}
\end{equation}
Expanding in the Hadamard basis of $K$ gives
\begin{align}
\ket{\Psi_0}=\frac{1}{\sqrt{2}}\bigl(\ket{\hat\Psi_+}\ket{+}_K+\ket{\hat\Psi_-}\ket{-}_K\bigr)
\end{align}
where
\begin{align}
\ket{\hat\Psi_{\pm}}=(\cos\theta\ket{00}+\sin\theta\ket{11})_{AB}\otimes(\alpha\ket{00}\pm\beta\ket{11})_{A'B'}
\end{align}
and $\ket{\hat\Psi_-}=(\mathbb{I}_{AA'B}\otimes\sigma_z^{B'})\ket{\hat\Psi_+}$.

By Nielsen's theorem, deterministic restoration of a Bell pair on $AB$ by LOCC across $AA'\mid BB'$ is possible iff the Schmidt vector of $\ket{\hat\Psi_+}$ is majorized by $(1/2,1/2,0,0)^T$ \cite{nielsen1999conditions,marshall2011inequalities}. The squared Schmidt coefficients are
\begin{equation}
\vec\lambda=\bigl(\alpha^2\cos^2\!\theta,\,\alpha^2\sin^2\!\theta,\,\beta^2\cos^2\!\theta,\,\beta^2\sin^2\!\theta\bigr)^T
\end{equation}
which yields the following feasibility condition.

\begin{theorem}[GHZ-assisted deterministic restoration]\label{thm:ghz_feasible}
Fix $\theta\in(0,\pi/4]$ and let the Bank distribute $\ket{\Phi}_{A'B'K}=\alpha\ket{000}+\beta\ket{111}$ with $\alpha\ge\beta>0$. In the Bank-measures model (Bank measures $K$ and broadcasts one bit), Alice and Bob can deterministically restore a perfect Bell pair on $AB$ by LOCC if and only if
\begin{equation}
\alpha^2\le \frac{1}{2\cos^2\!\theta}
\end{equation}
The same feasibility condition holds in the transfer model when $K$ is moved to Alice.
\end{theorem}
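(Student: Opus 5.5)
Sufficiency will come from the protocol already set up in Protocol~\ref{prot:ghz}, and necessity from Lemma~\ref{lem:largest_schmidt_obstruction}.

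\emph{Bank-measures, sufficiency.} The Bank measures $K$ in the Hadamard basis and broadcasts the bit. If the outcome is $-$, Bob applies $\sigma_z$ on $B'$. In either case the conditional state is $\ket{\hat\Psi_+}$, up to a known local unitary. I then apply Nielsen's criterion across $AA'\mid BB'$. Since the target $(1/2,1/2,0,0)$ puts all its weight on two entries, the majorization $\vec\lambda\prec(1/2,1/2,0,0)$ reduces to two conditions: $\lambda_{\max}\le 1/2$ and $\lambda_1+\lambda_2\le 1$. The second holds trivially. The only work is ordering the entries of $\vec\lambda$. For $\theta\le\pi/4$ we have $\cos\theta\ge\sin\theta$, and $\alpha\ge\beta$, so $\lambda_{\max}=\alpha^2\cos^2\theta$. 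Feasibility is therefore exactly $\alpha^2\cos^2\theta\le 1/2$. The LOCC conversion then yields $\ket{\Phi^+}$ on some two-qubit subspace of $AA'$ and $BB'$. A local swap or isometry on each side places it on the original registers $AB$, which is allowed because each side acts on its own registers.

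\emph{Bank-measures, necessity.} Here I must rule out every Bank measurement, not just the Hadamard one. I will use that the model is deterministic: every broadcast outcome $k$ must lead to a conditional pure state $\ket{\chi_k}$ on $AA'BB'$ from which a Bell pair can be obtained deterministically by LOCC. By Lemma~\ref{lem:largest_schmidt_obstruction}, each $\ket{\chi_k}$ must have largest squared Schmidt coefficient at most $1/2$.

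Any measurement on $K$ yields conditional states of the form $\ket{\psi(\theta)}_{AB}\otimes(a_k\ket{00}+b_k\ket{11})_{A'B'}$, with $|a_k|^2\propto\alpha^2|\langle e_k|0\rangle|^2$ and $|b_k|^2\propto\beta^2|\langle e_k|1\rangle|^2$. Since the POVM elements sum to the identity, averaging the weights $|a_k|^2$ with the outcome probabilities recovers $\alpha^2$. Hence some outcome has $|a_k|^2\ge\alpha^2$ (when $\alpha\ge\beta$, the larger of the two coefficients). That outcome's largest Schmidt coefficient is at least $\alpha^2\cos^2\theta$. So if $\alpha^2\cos^2\theta>1/2$, some outcome fails and determinism is lost. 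I expect this averaging argument over general POVMs to be the main obstacle. The case to check carefully is rank-one POVM elements, since coarse-grained elements yield mixed conditional states. Those should only make things worse, and I would handle them either by fine-graining or by applying the same convexity argument to their purifications.

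\emph{Transfer model.} Once $K$ is moved to Alice, the global state is pure across $AA'K\mid BB'$. Its Schmidt vector is the same $\vec\lambda$, because the GHZ state $\alpha\ket{000}+\beta\ket{111}$ has Schmidt coefficients $(\alpha,\beta)$ across $A'K\mid B'$. Nielsen's theorem together with Lemma~\ref{lem:largest_schmidt_obstruction} then gives the same iff condition, $\alpha^2\cos^2\theta\le 1/2$, with the same relocation of the Bell pair onto $AB$.
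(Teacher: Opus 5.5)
Your proof is correct. For sufficiency and for the transfer model you follow the paper: the Hadamard expansion makes the branches $\ket{\hat\Psi_\pm}$ equal up to $\sigma_z^{B'}$, Nielsen's criterion on $\vec\lambda$ reduces to $\alpha^2\cos^2\theta\le 1/2$, and relocating $K$ leaves the Schmidt spectrum unchanged.

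Where you differ is necessity in the Bank-measures model. The paper applies Nielsen's test only to the branches of the Hadamard measurement, so its ``only if'' is really a statement about that one Bank measurement. Your averaging argument rules out every one-shot Bank measurement: for any fine-grained POVM on $K$ one has $\sum_k p_k|a_k|^2=\alpha^2$, so some branch with $p_k>0$ has $\lambda_{\max}\ge\alpha^2\cos^2\theta$. You also need the fine-graining step for coarse POVM elements. It does go through: success probability is linear, and $\Phi^+_{AB}$ is pure, hence extremal. So a deterministic conversion of a mixed branch must deterministically convert each of its pure components. This gives a complete converse that the paper leaves implicit, and it identifies an explicit failing branch.

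A shorter route to the same converse also exists. Alice, once she holds $K$, can perform the Bank's measurement herself, so any Bank-measures protocol can be simulated in the transfer model. Hence Bank-measures feasibility implies transfer feasibility. Lemma~\ref{lem:largest_schmidt_obstruction} applied across $AA'K\mid BB'$, where the largest squared Schmidt coefficient is $\alpha^2\cos^2\theta$, then gives necessity for both models at once. This avoids the coarse-graining bookkeeping. It also covers Bank measurements chosen adaptively after messages from Alice and Bob, which your argument (the Bank measuring the initial state first) does not address.
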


\paragraph{Minimal Bank entanglement within the GHZ family.}
Across the bipartition $K\mid A'B'$ the GHZ resource has entropy of entanglement
\begin{equation}
E_{K: A'B'}(\Phi)=h_2(\alpha^2)
\end{equation}
where $h_2(p)=-p\log_2 p-(1-p)\log_2(1-p)$.
Within the restricted GHZ family and for fixed $\theta$, Theorem~\ref{thm:ghz_feasible} implies that the least-entangled feasible GHZ resource is obtained by saturating the bound, $\alpha^2=1/(2\cos^2\!\theta)$.

\subsection{Finite-shot optimal success probability}
When the deterministic condition in Theorem~\ref{thm:ghz_feasible} fails, Alice and Bob may still probabilistically restore a Bell pair. For pure-state transformations under LOCC, Vidal's theorem gives the optimal success probability \cite{vidal1999entanglement}. Specializing Vidal's expression to Bell-pair restoration yields the closed form
\begin{equation}
P_{\max}=\min\bigl(1,2(1-\lambda_{\max})\bigr)
\end{equation}
(see Appendix~\ref{app:vidal}), where $\lambda_{\max}$ is the largest squared Schmidt coefficient of the initial pure state across the relevant Alice--Bob bipartition.

For GHZ assistance, across $AA'\mid BB'$ (Bank-measures model) one has $\lambda_{\max}=\alpha^2\cos^2\!\theta$, hence
\begin{equation}
P_{\max}^{\rm GHZ}(\theta,\alpha)=\min\bigl(1,2(1-\alpha^2\cos^2\!\theta)\bigr)
\end{equation}
The same expression holds in the transfer model (across $AA'K\mid BB'$) since the Schmidt spectrum is unchanged by relocating $K$.

\begin{figure}[t]
\centering
\begin{tikzpicture}
\begin{axis}[
  width=\columnwidth,
  height=0.7\columnwidth,
  xmin=0.5, xmax=1,
  ymin=0, ymax=1.05,
  xlabel={$\alpha^2$},
  ylabel={$P_{\max}^{\rm GHZ}$},
  legend style={at={(0.02,0.02)},anchor=south west,font=\small},
  domain=0.5:1,
  samples=200,
]
\addplot[blue,thick] {min(1,2*(1-x*(cos(deg(pi/8)))^2))};
\addlegendentry{$\theta=\pi/8$}
\addplot[red,dashed,thick] {min(1,2*(1-x*(cos(deg(pi/6)))^2))};
\addlegendentry{$\theta=\pi/6$}
\addplot[black,dotted,thick] {min(1,2*(1-x*(cos(deg(pi/5)))^2))};
\addlegendentry{$\theta=\pi/5$}
\end{axis}
\end{tikzpicture}
\caption{Finite-shot optimal success probability for GHZ-assisted Bell-pair restoration as a function of $\alpha^2$ for representative link angles $\theta$. The curves satisfy $P_{\max}^{\rm GHZ}(\theta,\alpha)=\min\bigl(1,2(1-\alpha^2\cos^2\!\theta)\bigr)$.}
\label{fig:pmax_ghz}
\end{figure}

When the condition holds, an explicit deterministic LOCC map can be implemented via a finite-outcome POVM whose construction follows from a Birkhoff--von Neumann decomposition of the relevant doubly-stochastic map \cite{birkhoff1946tres} (we give the algorithm in Appendix~\ref{app:birkhoff}). Concretely, Alice measures $AA'$ with operators
\begin{equation}
M_k=\sqrt{p_k}\,\Lambda_{\text{target}}^{1/2}P_k^T\Lambda_{\text{source}}^{-1/2}
\end{equation}
communicates $k$ to Bob, and Bob applies the corresponding permutation unitary $U_k=P_k^T$ on $BB'$ to obtain $\ket{\Phi^+}_{AB}$.

If Alice and Bob apply this LOCC map before the Bank measures $K$, the post-processing produces a coherent superposition correlated with $K$,
\begin{equation}
\ket{\Psi_{\mathrm{final}}}=\frac{1}{\sqrt{2}}\bigl(\ket{\Phi^+}_{AB}\ket{+}_K+\tilde Z_k\ket{\Phi^+}_{AB}\ket{-}_K\bigr)
\end{equation}
where $\tilde Z_k=U_k(\mathbb{I}_B\otimes\sigma_z^{B'})U_k^{\dagger}$.
Tracing out $K$ yields a dephased mixture on $AB$, so the Bank must eventually measure $K$ in the $\{\ket{\pm}\}$ basis and send one classical bit to complete the correction (Bank-measures model).
\end{protocol}

\subsection{Asymmetric distribution: Bank transfers $K$ to Alice}

If the Bank transfers qubit $K$ to Alice, the active third party disappears, but the LOCC feasibility condition is unchanged (Schmidt spectra are invariant under local relocation). Across the cut $AA'K\mid BB'$ the state can be written
\begin{equation}
\begin{aligned}
\ket{\Psi_0} &= \alpha\cos\theta\ket{000}_{AA'K}\ket{00}_{BB'}\\ &+\beta\cos\theta\ket{011}_{AA'K}\ket{01}_{BB'}\\
&\quad+\alpha\sin\theta\ket{100}_{AA'K}\ket{10}_{BB'}\\ &+\beta\sin\theta\ket{111}_{AA'K}\ket{11}_{BB'}
\end{aligned}
\end{equation}
with squared Schmidt coefficients $(\alpha^2\cos^2\!\theta,\beta^2\cos^2\!\theta,\alpha^2\sin^2\!\theta,\beta^2\sin^2\!\theta)$.

Operationally, Alice can now (i) measure $K$ in the Hadamard basis herself to learn the $\pm$ branch and then run the same majorization POVM on $AA'$; or (ii) absorb the $K$-measurement into a single joint POVM on $AA'K$ by taking tensor products of the $M_k$ with $\ket{\pm}\!\bra{\pm}_K$. In either case, Alice sends Bob the required classical indices and Bob applies the corresponding unitary corrections, recovering $\ket{\Phi^+}_{AB}$ deterministically without any Bank-side measurement.

\section{W-Class States}

A second natural choice for a Bank-supplied tripartite resource is a W-class state \cite{dur2000three}
\begin{equation}
\ket{W}_{A'B'K}=\alpha\ket{001}+\beta\ket{010}+\gamma\ket{100}
\end{equation}
with $\alpha,\beta,\gamma>0$ and $\alpha^2+\beta^2+\gamma^2=1$. If the Bank measures $K$ in the Hadamard basis and broadcasts the one-bit outcome, Alice and Bob share the (known) conditional state
\begin{equation}
\ket{w_{\pm}}_{A'B'}=\beta\ket{01}+\gamma\ket{10}\pm\alpha\ket{00}
\end{equation}
so that the post-measurement branch on $AA'BB'$ is
\begin{equation}\label{eq:W-branch}
\ket{\Psi_{\pm}}_{AA'BB'}=(\cos\theta\ket{00}+\sin\theta\ket{11})_{AB}\otimes\ket{w_{\pm}}_{A'B'}
\end{equation}

Deterministic recovery of a Bell pair on $AB$ reduces to an LOCC conversion across the cut $AA'\mid BB'$. By Nielsen's theorem (majorization) \cite{nielsen1999conditions,marshall2011inequalities}, this is possible iff the Schmidt vector of $\ket{\Psi_{\pm}}$ is majorized by $(1/2,1/2,0,0)^T$. Diagonalizing the reduced state on $BB'$ gives the squared Schmidt coefficients
\begin{align}
\lambda_{1,2} &= \cos^2\!\theta\left(\frac{1\pm\sqrt{1-4\beta^2\gamma^2}}{2}\right)\\
\lambda_{3,4} &= \sin^2\!\theta\left(\frac{1\pm\sqrt{1-4\beta^2\gamma^2}}{2}\right)
\end{align}
so the binding constraint is $\lambda_1\le 1/2$, i.e.
\begin{equation}\label{eq:W-majorization}
\cos^2\!\theta\bigl(1+\sqrt{1-4\beta^2\gamma^2}\bigr)\le 1
\end{equation}
Rearranging yields the convenient feasibility condition
\begin{equation}\label{eq:W-constraint}
4\beta^2\gamma^2\ge 1-\tan^4\!\theta
\end{equation}

\begin{theorem}[W-assisted deterministic restoration in the Bank-measures model]\label{thm:w_bankmeasures_feasible}
Fix $\theta\in(0,\pi/4]$ and let the Bank distribute $\ket{W}_{A'B'K}=\alpha\ket{001}+\beta\ket{010}+\gamma\ket{100}$ with $\alpha,\beta,\gamma>0$. In the Bank-measures model, Alice and Bob can deterministically restore a Bell pair on $AB$ by LOCC if and only if Eq.~\eqref{eq:W-constraint} holds.
\end{theorem}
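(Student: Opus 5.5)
Both directions reduce to Nielsen's majorization criterion applied branch by branch. After the Bank measures $K$ in the Hadamard basis and broadcasts the outcome, Alice and Bob hold one of two known pure states $\ket{\Psi_\pm}$ of Eq.~\eqref{eq:W-branch}, each occurring with probability $1/2$.

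The first step is to observe that the two branches are local-unitarily equivalent: $\ket{w_-}=(\sigma_z\otimes\sigma_z)\ket{w_+}$ up to a global sign, since $\sigma_z\otimes\sigma_z$ fixes $\ket{00}$ and flips the sign of $\ket{01}$ and $\ket{10}$. They therefore have identical Schmidt spectra across $AA'\mid BB'$. Deterministic restoration in the Bank-measures model is then equivalent to deterministic LOCC conversion $\ket{\Psi_+}\to\ket{\Phi^+}_{AB}$ (times a product state on $A'B'$).

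For the "if" direction, Nielsen's theorem applies directly in each branch. For the "only if" direction, suppose some scheme works deterministically with an arbitrary Bank measurement on $K$. In each branch it yields a pure conditional state, and Lemma~\ref{lem:largest_schmidt_obstruction} must hold in every branch. I would restrict the proof to the Hadamard-basis protocol as defined in the paper, so that the obstruction in the $\pm$ branch suffices. As a remark, I would add that a computational-basis measurement is worse: outcome $1$ leaves $A'B'$ in a product state, so the largest coefficient is $\cos^2\theta>1/2$ and that branch fails.

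The second step is the spectrum computation. Since $\ket{\Psi_+}$ is a tensor product across the cut, its Schmidt coefficients are products of those of $\ket{\psi(\theta)}$, namely $(\cos^2\theta,\sin^2\theta)$, with those of $\ket{w_+}$. For $\ket{w_+}$, the coefficient matrix in the $A'\otimes B'$ basis is
\begin{equation}
C=\begin{pmatrix}\alpha&\beta\\ \gamma&0\end{pmatrix}.
\end{equation}
Then $CC^\dagger$ has trace $1$ and determinant $|\det C|^2=\beta^2\gamma^2$, so its eigenvalues are $\mu_\pm=\tfrac12\bigl(1\pm\sqrt{1-4\beta^2\gamma^2}\bigr)$. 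This reproduces $\lambda_{1,\dots,4}$ as stated.

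The third step, which is the main point needing care, is reducing full majorization by $(1/2,1/2,0,0)$ to the single inequality $\lambda_1\le 1/2$. Majorization here requires three partial-sum conditions: $\lambda_1\le 1/2$, $\lambda_1+\lambda_2\le 1$, and $\lambda_1+\lambda_2+\lambda_3\le 1$. The last two hold trivially because the partial sums of the target are already $1$. One must also check that $\lambda_1=\cos^2\theta\,\mu_+$ really is the largest coefficient. This follows because $\theta\le\pi/4$ gives $\cos^2\theta\ge\sin^2\theta$, and $\mu_+\ge\mu_-$, so $\cos^2\theta\,\mu_+$ dominates every other product.

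Finally, rearrange $\cos^2\theta\,(1+s)\le1$ with $s=\sqrt{1-4\beta^2\gamma^2}$ into $s\le\sec^2\theta-1=\tan^2\theta$. Both sides are nonnegative, so squaring is reversible and gives $1-4\beta^2\gamma^2\le\tan^4\theta$, which is Eq.~\eqref{eq:W-constraint}.
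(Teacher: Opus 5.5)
Your proposal is correct and follows the paper's own argument step for step: Hadamard measurement on $K$, Nielsen majorization in each branch, the product spectrum $\cos^2\!\theta\,\mu_\pm$, $\sin^2\!\theta\,\mu_\pm$, the binding constraint $\lambda_1\le 1/2$, and the reversible rearrangement to Eq.~\eqref{eq:W-constraint}, plus two useful details the paper leaves implicit (the local-unitary equivalence of the branches and the check of the remaining partial sums). Like the paper, your ``only if'' covers only the Hadamard measurement, though for a non-adaptive measurement on a qubit $K$ the restriction can be lifted (after fine-graining, an effect $\propto\ket{\chi}\bra{\chi}$ with $t=|\langle 0|\chi\rangle|^2$ leaves $A'B'$ with normalized $|\det|=t\beta\gamma/(\alpha^2+t(\beta^2+\gamma^2-\alpha^2))$, increasing in $t$, and completeness forces some outcome with $t\le 1/2$); also, in your side remark the computational-basis branch has $\cos^2\!\theta\ge 1/2$, strictly only for $\theta<\pi/4$.
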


\subsection{Finite-shot optimal success probability}
When Eq.~\eqref{eq:W-constraint} fails, deterministic restoration is impossible in the Bank-measures model, but probabilistic restoration may still succeed. Since each post-measurement branch is a pure state on $AA'\mid BB'$, Vidal's theorem applies directly \cite{vidal1999entanglement}, with the Bell-pair specialization summarized in Appendix~\ref{app:vidal}. Using the largest squared Schmidt coefficient $\lambda_1$ from Eq.~\eqref{eq:W-majorization}, the optimal probability to restore a Bell pair in a branch is
\begin{equation}
P_{\max}^{\rm W,meas}(\theta,\beta,\gamma)=\min\bigl(1,2(1-\lambda_1)\bigr)
\end{equation}
Equivalently, writing $\Delta=\sqrt{1-4\beta^2\gamma^2}$ so that $\lambda_1=\tfrac{\cos^2\!\theta}{2}(1+\Delta)$, one obtains
\begin{equation}
P_{\max}^{\rm W,meas}(\theta,\beta,\gamma)=\min\Bigl(1,2-\cos^2\!\theta(1+\Delta)\Bigr)
\end{equation}
In particular, $P_{\max}^{\rm W,meas}=1$ exactly when Eq.~\eqref{eq:W-constraint} holds.

In the transfer model, the pure state across $AA'K\mid BB'$ has largest squared Schmidt coefficient
\begin{equation}
\lambda_{\max}=\cos^2\!\theta\max\bigl(\beta^2,1-\beta^2\bigr)
\end{equation}
and therefore
\begin{equation}
P_{\max}^{\rm W,trans}(\theta,\beta)=\min\Bigl(1,2\bigl(1-\cos^2\!\theta\max(\beta^2,1-\beta^2)\bigr)\Bigr)
\end{equation}

\begin{figure}[t]
\centering
\begin{tikzpicture}
\begin{axis}[
  width=\columnwidth,
  height=0.7\columnwidth,
  xmin=0, xmax=1,
  ymin=0, ymax=1.05,
  xlabel={$\beta^2$},
  ylabel={$P_{\max}$},
  legend style={at={(0.02,0.02)},anchor=south west,font=\small},
  domain=0:1,
  samples=250,
]
\addplot[blue,thick] {min(1, 2 - (cos(deg(pi/8)))^2 * (1 + sqrt(1 - 2*x*(1-x))))};
\addlegendentry{$P_{\max}^{\rm W,meas},\;\theta=\pi/8,\;\alpha=\gamma$}
\addplot[red,dashed,thick] {min(1, 2*(1 - (cos(deg(pi/8)))^2 * max(x,1-x)))};
\addlegendentry{$P_{\max}^{\rm W,trans},\;\theta=\pi/8$}
\end{axis}
\end{tikzpicture}
\caption{Finite-shot optimal success probability for W-assisted Bell-pair restoration as a function of $\beta^2$ at fixed $\theta=\pi/8$. On the symmetric slice $\alpha=\gamma$ (solid), the curve satisfies $P_{\max}^{\rm W,meas}=\min\bigl(1,2-\cos^2\!\theta(1+\sqrt{1-2\beta^2(1-\beta^2)})\bigr)$. In the transfer model (dashed), $P_{\max}^{\rm W,trans}=\min\bigl(1,2(1-\cos^2\!\theta\max(\beta^2,1-\beta^2))\bigr)$.}
\label{fig:pmax_w}
\end{figure}

\subsection{Operational inequivalence: measuring $K$ vs transferring $K$}

Theorem~\ref{thm:w_bankmeasures_feasible} gives the deterministic feasibility condition in the Bank-measures model. We now derive the corresponding condition in the transfer model and compare the two.

\textbf{Transfer model.} After transferring $K$ to Alice, the relevant bipartition is $AA'K\mid BB'$. The squared Schmidt coefficients are
\begin{align}
\lambda_1&=\cos^2\!\theta\,(\alpha^2+\gamma^2),
&
\lambda_2&=\cos^2\!\theta\,\beta^2,\nonumber\\
\lambda_3&=\sin^2\!\theta\,(\alpha^2+\gamma^2),
&
\lambda_4&=\sin^2\!\theta\,\beta^2
\end{align}
so Nielsen majorization reduces to $\max(\lambda_1,\lambda_2)\le 1/2$ \cite{nielsen1999conditions}. Using $\alpha^2+\gamma^2=1-\beta^2$ yields
\begin{equation}\label{eq:w_transfer_bound}
\frac{1-\tan^2\!\theta}{2}\le \beta^2\le \frac{1+\tan^2\!\theta}{2}
\end{equation}

\begin{proposition}[W-assisted deterministic restoration with transfer]\label{prop:w_transfer_feasible}
Fix $\theta\in(0,\pi/4]$ and a W-class resource $\ket{W}_{A'B'K}$. In the transfer model, Alice and Bob can deterministically restore a Bell pair on $AB$ by LOCC if and only if Eq.~\eqref{eq:w_transfer_bound} holds.
\end{proposition}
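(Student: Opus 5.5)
Since $K$ is handed to Alice and the Bank then leaves, the global state is a fixed pure state on $AA'K\mid BB'$, and no classical outcome is broadcast. Deterministic Bell-pair restoration by LOCC is therefore exactly a single-copy pure-state conversion, and Nielsen's theorem (cf.\ Lemma~\ref{lem:largest_schmidt_obstruction}) applies directly. The proof then has three steps: compute the Schmidt spectrum, specialize majorization against $(1/2,1/2,0,0)^T$ to a bound on the largest coefficient, and solve that bound for $\beta^2$.

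\textbf{Step 1 (Schmidt spectrum).} Group the $W$ terms by Bob's register $B'$. The terms with $B'=0$ are $\alpha\ket{0}_{A'}\ket{1}_K+\gamma\ket{1}_{A'}\ket{0}_K$, and these Alice-side vectors are orthogonal to the $B'=1$ term $\beta\ket{0}_{A'}\ket{0}_K$. So across $A'K\mid B'$ the state is already in Schmidt form, with squared coefficients $\alpha^2+\gamma^2=1-\beta^2$ and $\beta^2$. Tensoring with the Schmidt decomposition of $\ket{\psi(\theta)}_{AB}$ multiplies the spectra, which gives the four $\lambda_i$ displayed above. Orthogonality on both sides is immediate from the product structure, so this step is routine.

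\textbf{Step 2 (majorization reduces to one inequality).} For a four-component target $(1/2,1/2,0,0)^T$, the majorization $\vec\lambda\prec(1/2,1/2,0,0)^T$ consists of the partial-sum conditions $\lambda_{[1]}\le 1/2$ and $\lambda_{[1]}+\lambda_{[2]}\le 1$; the remaining conditions are trivially true because the target's partial sums already equal $1$ from the second term on. Hence the condition is equivalent to $\lambda_{[1]}\le 1/2$. Since $\theta\le\pi/4$ gives $\cos^2\theta\ge\sin^2\theta$, the largest coefficient is $\cos^2\theta\max(\beta^2,1-\beta^2)$, which is $\max(\lambda_1,\lambda_2)$.

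\textbf{Step 3 (solve).} The inequality $\cos^2\theta\,\beta^2\le 1/2$ becomes $\beta^2\le 1/(2\cos^2\theta)=(1+\tan^2\theta)/2$. The inequality $\cos^2\theta(1-\beta^2)\le 1/2$ becomes $\beta^2\ge 1-\sec^2\theta/2=(1-\tan^2\theta)/2$. Together these are Eq.~\eqref{eq:w_transfer_bound}. Both directions of the iff come from Nielsen's theorem: sufficiency also gives an explicit protocol via the construction in Appendix~\ref{app:birkhoff}, and necessity is Lemma~\ref{lem:largest_schmidt_obstruction}.

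\textbf{Main obstacle.} The one point needing care is the modelling claim that the transfer model is exactly LOCC across $AA'K\mid BB'$ starting from a pure state, with the target Bell pair on the specific registers $AB$ and $A'$, $K$, $B'$ discarded. Nielsen's theorem yields a Bell pair in some two-dimensional Schmidt subspaces. To land it on $AB$, I would append local unitaries (Alice on $AA'K$, Bob on $BB'$) that map those subspaces onto $AB$ with ancillas in $\ket{0}$. These are permitted LOCC operations, so feasibility is unaffected.
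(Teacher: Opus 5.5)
Your proposal is correct and follows essentially the same route as the paper: compute the product Schmidt spectrum across $AA'K\mid BB'$, reduce Nielsen majorization against $(1/2,1/2,0,0)^T$ to $\max(\lambda_1,\lambda_2)\le 1/2$, and solve for $\beta^2$ using $\alpha^2+\gamma^2=1-\beta^2$, with necessity from Lemma~\ref{lem:largest_schmidt_obstruction}. Your grouping-by-$B'$ justification of the Schmidt form and your remark on placing the Bell pair on the registers $AB$ via local unitaries fill in details the paper leaves implicit.
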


\paragraph{Separation.}
Comparing Proposition~\ref{prop:w_transfer_feasible} with Theorem~\ref{thm:w_bankmeasures_feasible} shows that the two operational models need not coincide.

\begin{corollary}[Operational separation for W-class assistance]\label{cor:w_separation}
There exist $(\theta,\alpha,\beta,\gamma)$ for which deterministic restoration is possible in the transfer model but impossible in the Bank-measures model.
\end{corollary}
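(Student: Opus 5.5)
The plan is to exhibit one explicit parameter point and check it against the two feasibility criteria already established. Proposition~\ref{prop:w_transfer_feasible} gives the transfer-model condition, Eq.~\eqref{eq:w_transfer_bound}. Theorem~\ref{thm:w_bankmeasures_feasible} gives the Bank-measures condition, Eq.~\eqref{eq:w_transfer_bound}'s counterpart Eq.~\eqref{eq:W-constraint}. Both are stated as ``if and only if'', so it suffices to find $(\theta,\alpha,\beta,\gamma)$ satisfying Eq.~\eqref{eq:w_transfer_bound} while violating Eq.~\eqref{eq:W-constraint}.

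The choice of point follows from how the two conditions treat the amplitudes. The transfer condition depends only on $\beta^2$, and the midpoint $\beta^2=1/2$ satisfies it for every $\theta\in(0,\pi/4]$, since $(1-\tan^2\theta)/2\le 1/2\le(1+\tan^2\theta)/2$. The Bank-measures condition needs the product $4\beta^2\gamma^2$ to be large. With $\beta^2=1/2$ this product equals $2\gamma^2$, so it can be made small by taking $\gamma$ small and putting the remaining weight into $\alpha$.

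Concretely, I would take $\theta=\pi/6$, so that $\tan^2\theta=1/3$ and $\tan^4\theta=1/9$, together with $\beta^2=1/2$ and $\alpha^2=\gamma^2=1/4$. All amplitudes are positive and normalized.

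\textbf{Transfer model.} The transfer window is $1/3\le\beta^2\le 2/3$. It contains $1/2$, so restoration is possible. As a direct cross-check, the transfer-model Schmidt coefficients are $\lambda_1=\cos^2\theta(\alpha^2+\gamma^2)=3/8$ and $\lambda_2=3/8$, both at most $1/2$.

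\textbf{Bank-measures model.} Here $4\beta^2\gamma^2=1/2$, while $1-\tan^4\theta=8/9$. Since $1/2<8/9$, Eq.~\eqref{eq:W-constraint} fails. Equivalently, $\Delta=\sqrt{1/2}$ gives $\lambda_1=\tfrac{3}{8}(1+1/\sqrt2)\approx 0.64>1/2$. Both branches $\ket{w_\pm}$ have identical Schmidt spectra, so deterministic restoration fails whichever outcome is broadcast.

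More generally, for any $\theta\in(0,\pi/4)$, $\beta^2=1/2$ and $\gamma^2<(1-\tan^4\theta)/2$ gives a separating point. This shows the separation is an open region and not an isolated example.

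The only delicate point is scope. Theorem~\ref{thm:w_bankmeasures_feasible} is formulated for the Bank measuring $K$ in the Hadamard basis and broadcasting, so the corollary inherits that specification of the Bank-measures model. If one wanted to strengthen the result to arbitrary Bank measurements on $K$, the extra step would be to show that no rank-one measurement on $K$ yields branches with $\lambda_1\le1/2$ for every outcome. I would attempt this by noting that any outcome $\bra{\phi}_K$ produces $\bar a\,\alpha\ket{00}+\bar b(\beta\ket{01}+\gamma\ket{10})$ on $A'B'$. For the chosen point, I would bound the largest Schmidt coefficient of this state from below uniformly in $(a,b)$. I expect this to be the main obstacle, but it is not needed for the corollary as stated.
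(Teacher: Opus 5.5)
Your proof is correct and is essentially the paper's argument: both criteria are ``if and only if'' statements (Theorem~\ref{thm:w_bankmeasures_feasible} and Proposition~\ref{prop:w_transfer_feasible}), and your point $\theta=\pi/6$, $\beta^2=1/2$, $\alpha^2=\gamma^2=1/4$ lies on the paper's symmetric slice $\alpha=\gamma$ with $\tan^4\theta=1/9<1/2$, i.e.\ inside the separation region of Proposition~\ref{prop:w_symmetric_separation}. One caution on your optional strengthening: a lower bound uniform in $(a,b)$ is false, because at your point the single outcome $\bra{0}_K$ gives the branch $\beta\ket{01}+\gamma\ket{10}$ with $\lambda_1=\cos^2\theta\,\beta^2/(\beta^2+\gamma^2)=1/2$ exactly; the argument must instead use completeness of a rank-one POVM $\{c_m\ket{\phi_m}\bra{\phi_m}\}$ with $\ket{\phi_m}=a_m\ket{0}+b_m\ket{1}$, namely $\sum_m c_m|a_m|^2=\sum_m c_m|b_m|^2$, which forces some outcome with $|b_m|\ge|a_m|$ whose branch then has $\lambda_1\ge\tfrac{3}{8}(1+1/\sqrt{2})>1/2$ (the Hadamard value).
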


To make the separation explicit on a simple one-parameter slice, assume $\alpha=\gamma$.

\begin{proposition}[Separation region for the symmetric slice $\alpha=\gamma$]\label{prop:w_symmetric_separation}
On the symmetric slice $\alpha=\gamma$ (so $\alpha^2=\gamma^2=(1-\beta^2)/2$), the Bank-measures feasibility condition Eq.~\eqref{eq:W-constraint} is satisfiable for some $\beta^2$ if and only if $\tan^4\!\theta\ge 1/2$. In contrast, transfer feasibility holds for some $\beta^2$ for every $\theta\in(0,\pi/4]$.
\end{proposition}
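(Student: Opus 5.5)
Both claims reduce to one-variable optimizations over $x:=\beta^2\in(0,1)$ on the slice $\alpha=\gamma$, where $\alpha^2=\gamma^2=(1-x)/2$. We treat the Bank-measures claim and the transfer claim separately.

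\textbf{Bank-measures claim.} By Theorem~\ref{thm:w_bankmeasures_feasible}, deterministic restoration in this model is equivalent to Eq.~\eqref{eq:W-constraint}. On the slice,
\begin{equation}
4\beta^2\gamma^2=4x\cdot\frac{1-x}{2}=2x(1-x).
\end{equation}
So the question becomes whether some $x\in(0,1)$ satisfies $2x(1-x)\ge 1-\tan^4\theta$.

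The map $x\mapsto 2x(1-x)$ is a concave parabola. On the open interval $(0,1)$ its supremum is attained at the interior point $x=1/2$, where it equals $1/2$. Hence a feasible $x$ exists if and only if $1/2\ge 1-\tan^4\theta$, that is, $\tan^4\theta\ge 1/2$. For the "if" direction we exhibit the witness $x=1/2$, which corresponds to the admissible resource $\alpha^2=\gamma^2=1/4$, $\beta^2=1/2$, all strictly positive. For the "only if" direction we use that $2x(1-x)\le 1/2$ for every $x$.

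\textbf{Transfer claim.} By Proposition~\ref{prop:w_transfer_feasible}, transfer feasibility is Eq.~\eqref{eq:w_transfer_bound}. That condition involves only $\beta^2$, since $\alpha^2+\gamma^2=1-\beta^2$, so the slice restriction imposes no further constraint. It remains to show the interval in Eq.~\eqref{eq:w_transfer_bound} meets $(0,1)$ for every $\theta\in(0,\pi/4]$. We take the witness $x=1/2$. Since $0<\tan^2\theta\le 1$, we have
\begin{equation}
\frac{1-\tan^2\theta}{2}\le\frac12\le\frac{1+\tan^2\theta}{2},
\end{equation}
so $\beta^2=1/2$ with $\alpha^2=\gamma^2=1/4$ is feasible for all such $\theta$.

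\textbf{Main obstacle.} The step needing the most care is the endpoint bookkeeping. Strict positivity $\alpha,\beta,\gamma>0$ forces $x\in(0,1)$, so we must check that both optima are attained in the interior rather than only approached at an excluded endpoint. In both cases the optimizer is $x=1/2$, which is interior. To close, we note that the Bank-measures condition $\tan^4\theta\ge 1/2$ fails on most of $(0,\pi/4]$: concretely, for $\tan\theta<2^{-1/4}$, in particular $\theta=\pi/8$. On that range the symmetric slice realizes the separation asserted in Corollary~\ref{cor:w_separation}.
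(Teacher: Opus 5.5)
Your proof is correct and is essentially the argument the paper relies on; the paper gives no separate proof, but its later discussion of the slice uses exactly this reduction. Substituting $\gamma^2=(1-\beta^2)/2$ turns Eq.~\eqref{eq:W-constraint} into $2\beta^2(1-\beta^2)\ge 1-\tan^4\theta$, whose left side is at most $1/2$ with equality at $\beta^2=1/2$, and this same interior witness $\beta^2=1/2$ (with $\alpha^2=\gamma^2=1/4$) always lies in the transfer interval of Eq.~\eqref{eq:w_transfer_bound}.
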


Figure~\ref{fig:w_state_phase} illustrates this mismatch. Quantitative resource-cost refinements (e.g., minimal $E_{K:A'B'}$ on a slice) are discussed in Section~\ref{sec:extensions}.

\begin{figure}[htbp]
\centering
\includegraphics[width=\columnwidth]{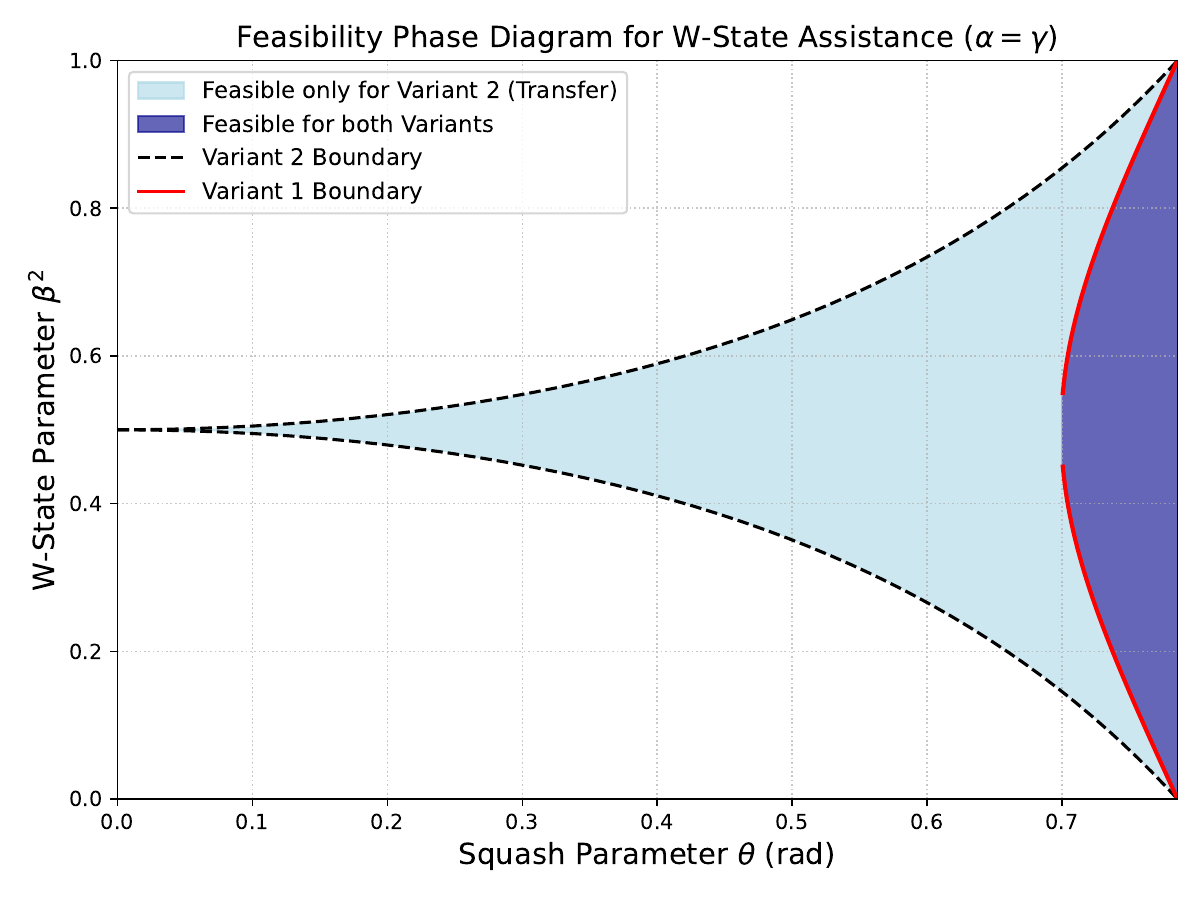}
\caption{Feasibility phase diagram for W-state assistance under the symmetric assumption $\alpha=\gamma$. The light-blue region highlights the operational inequivalence: these parameters succeed only when the Bank transfers $K$ to Alice (not when it measures $K$).}
\label{fig:w_state_phase}
\end{figure}
\section{General Bank Resources Beyond GHZ and W}
\label{sec:general_bank}

The GHZ- and W-class analyses above exploit the fact that the global state is pure across the relevant Alice--Bob bipartition, so deterministic Bell-pair restoration reduces to a majorization test on Schmidt coefficients \cite{nielsen1999conditions,marshall2011inequalities}. The same logic yields a compact characterization for \emph{arbitrary} pure Bank resources on $A'B'K$ in the transfer model, and a natural ``assisted'' optimization formulation in the Bank-measures model.

\subsection{Transfer model: a universal majorization criterion}
Let the imperfect link be $\ket{\psi(\theta)}_{AB}$ and let the Bank distribute an arbitrary pure state $\ket{\Omega}_{A'B'K}$. After transferring $K$ to Alice, the joint state is pure across the bipartition $\mathcal{A}\mid\mathcal{B}$ with $\mathcal{A}=AA'K$ and $\mathcal{B}=BB'$. Alice and Bob seek deterministic Bell-pair restoration on $AB$ by LOCC, allowing arbitrary leftover ``junk'' registers.

\begin{theorem}[General transfer-model feasibility]
\label{thm:general_transfer}
In the transfer model, deterministic Bell-pair restoration on $AB$ is possible by LOCC if and only if the largest squared Schmidt coefficient of the initial pure state across $AA'K\mid BB'$ is at most $1/2$.
\end{theorem}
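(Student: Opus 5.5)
We would prove both directions by reducing the task to Nielsen's majorization theorem applied across the cut $\mathcal{A}\mid\mathcal{B}=AA'K\mid BB'$. In the transfer model the Bank is gone and the state is pure across this cut, so the only allowed resource is LOCC between Alice and Bob. Let $\vec\lambda^{\downarrow}=(\lambda_1\ge\lambda_2\ge\cdots\ge\lambda_d)$ be the squared Schmidt coefficients of $\ket{\psi(\theta)}_{AB}\otimes\ket{\Omega}_{A'B'K}$ across the cut. The target is $\ket{\Phi^+}_{AB}$, possibly tensored with an arbitrary junk state on the remaining registers. Nielsen's theorem needs a pure target, so we first argue that we may take the junk to be a product state (say $\ket{0\cdots0}$). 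Any deterministic LOCC protocol that outputs $\ket{\Phi^+}_{AB}$ with some junk can be followed by local discarding and re-preparation, which is itself LOCC. Moreover, if the output is $\ket{\Phi^+}_{AB}$ exactly, purity forces the junk to decouple from $AB$.

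\textbf{Necessity.} This is precisely Lemma~\ref{lem:largest_schmidt_obstruction} applied with $\mathcal{A}=AA'K$ and $\mathcal{B}=BB'$. To make it self-contained, we would argue as follows. A deterministic LOCC conversion $\ket{\Psi_0}\to\ket{\Phi^+}_{AB}\otimes\ket{\chi}_{\rm junk}$ is a conversion between pure states across $\mathcal{A}\mid\mathcal{B}$. By Nielsen's theorem this requires $\vec\lambda\prec\vec\mu\otimes\vec\nu$, where $\vec\mu=(1/2,1/2)$ and $\vec\nu$ is the Schmidt vector of the junk. Every entry of $\vec\mu\otimes\vec\nu$ is at most $1/2$, and the first-partial-sum inequality of majorization gives $\lambda_1\le\max_i(\vec\mu\otimes\vec\nu)_i\le 1/2$.

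\textbf{Sufficiency.} This is the main point to verify. We would show that $\lambda_1\le 1/2$ already implies $\vec\lambda\prec(1/2,1/2,0,\dots,0)$, padded with zeros to length $d$. The $k=1$ partial sum is exactly $\lambda_1\le 1/2$. For every $k\ge2$ the target partial sum equals $1$, which dominates $\sum_{i\le k}\lambda_i$ trivially. Nielsen's theorem then gives an LOCC protocol mapping $\ket{\Psi_0}$ to a state with Schmidt vector $(1/2,1/2)$ across $\mathcal{A}\mid\mathcal{B}$.

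Some care is needed to land the Bell pair on the \emph{specific} registers $AB$. Nielsen's theorem produces a maximally entangled two-dimensional state supported on some Schmidt basis of $\mathcal{H}_{AA'K}\otimes\mathcal{H}_{BB'}$. Local isometries on each side, which are local unitaries after adding ancillas, rotate these Schmidt bases to $\{\ket{0}_A\ket{0\cdots}_{A'K},\ket{1}_A\ket{0\cdots}_{A'K}\}$ and the analogous basis on Bob's side. This yields $\ket{\Phi^+}_{AB}\otimes\ket{0\cdots0}$. An explicit protocol would follow the Birkhoff--von Neumann construction of Appendix~\ref{app:birkhoff}: Alice performs a POVM, communicates the outcome, and Bob applies a permutation correction.

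The main obstacle is bookkeeping rather than substance. First, the target must be stated as a pure state with fixed registers, so that Nielsen's theorem applies despite the allowance for junk. Second, the local-unitary relocation to $AB$ must be legitimate; this holds because Alice holds $A$ and Bob holds $B$.
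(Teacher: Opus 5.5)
Your proposal is correct and matches the paper's proof. Necessity is the largest-Schmidt-coefficient obstruction of Lemma~\ref{lem:largest_schmidt_obstruction}, i.e.\ the $k=1$ partial-sum inequality of Nielsen's majorization condition; sufficiency holds because $\lambda_{\max}\le 1/2$ is equivalent to $\vec\lambda\prec(1/2,1/2,0,\dots)$, so Nielsen's theorem applies, and your extra bookkeeping (discarding and re-preparing the junk to get a fixed pure target, and using local unitaries to place the Bell pair on the registers $AB$) makes explicit what the paper leaves implicit.
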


\begin{proof}
The necessity of $\lambda_{\max}\le 1/2$ follows from Lemma~\ref{lem:largest_schmidt_obstruction}. For sufficiency, let $\vec\lambda$ be the squared Schmidt vector of the initial pure state across $AA'K\mid BB'$ and note that the target state $\ket{\Phi^+}_{AB}\otimes\ket{\mathrm{junk}}$ has squared Schmidt vector $(1/2,1/2,0,\dots)$ across the same cut. The condition $\lambda_{\max}\le 1/2$ is equivalent to the majorization relation $\vec\lambda\prec(1/2,1/2,0,\dots)$, and therefore the conversion is possible by Nielsen's theorem \cite{nielsen1999conditions,marshall2011inequalities}.
\end{proof}

Theorem~\ref{thm:general_transfer} recovers the earlier GHZ transfer equivalence (Theorem~\ref{thm:ghz_feasible}) and the W transfer condition (Proposition~\ref{prop:w_transfer_feasible}) as special cases.

\paragraph{Constructive protocol (transfer model).}
When $\lambda_{\max}\le 1/2$, an explicit finite-outcome LOCC protocol can be obtained from the standard constructive proof of Nielsen's theorem via a Birkhoff--von Neumann decomposition. Since this construction is the same as in the GHZ case, we do not repeat it here; see Appendix~\ref{app:birkhoff} for the decomposition algorithm and the corresponding POVM form.

\paragraph{Finite-shot success probability in the transfer model.}
Vidal's theorem for pure-state LOCC transformations \cite{vidal1999entanglement} implies that the optimal probability of converting an arbitrary pure state with largest squared Schmidt coefficient $\lambda_{\max}$ into a Bell pair on $AB$ (allowing arbitrary leftover systems) is
\begin{equation}
P_{\max}=\min\bigl(1,2(1-\lambda_{\max})\bigr)
\end{equation}
(see Appendix~\ref{app:vidal} for the specialization).

\subsection{Bank-measures model: assisted feasibility as a minimax problem}
In contrast to the transfer model (a single pure-state majorization test), the Bank-measures model introduces a \emph{family} of conditional bipartite states: the Bank chooses a measurement on $K$, but deterministic restoration requires that \emph{every} measurement branch admit Bell-pair restoration after the outcome is broadcast. This naturally leads to a worst-case (minimax) criterion.

Concretely, the Bank performs a measurement on $K$ and broadcasts the classical outcome $m$. For a chosen POVM $\{M_m\}_m$ on $K$ satisfying $\sum_m M_m^{\dagger}M_m=\mathbb{I}_K$, the unnormalized post-measurement pure state on Alice and Bob's registers is
\begin{equation}
\ket{\widetilde\Psi_m}_{AA'BB'}=(\mathbb{I}_{AA'BB'}\otimes M_m)\bigl(\ket{\psi(\theta)}_{AB}\otimes\ket{\Omega}_{A'B'K}\bigr)
\end{equation}
with branch probability $p_m=\braket{\widetilde\Psi_m|\widetilde\Psi_m}$. Conditioning on $m$ gives the normalized state $\ket{\Psi_m}=\ket{\widetilde\Psi_m}/\sqrt{p_m}$. After receiving $m$, Alice and Bob may apply an LOCC map (which may depend on $m$) to restore a Bell pair on $AB$.

For each branch, deterministic Bell-pair restoration is possible iff the largest squared Schmidt coefficient of $\ket{\Psi_m}$ across $AA'\mid BB'$ is at most $1/2$ (the same majorization condition as in Theorem~\ref{thm:general_transfer}). This motivates the assisted minimax quantity
\begin{equation}
\mu_*(\theta,\Omega)=\inf_{\{M_m\}}\sup_m\lambda_{\max}(\Psi_m)
\end{equation}
where $\lambda_{\max}(\Psi_m)$ denotes the largest squared Schmidt coefficient of $\ket{\Psi_m}$ across $AA'\mid BB'$.

\begin{proposition}[Bank-measures feasibility as a minimax condition]
\label{prop:bankmeasures_minimax}
Deterministic Bell-pair restoration in the Bank-measures model is possible if and only if $\mu_*(\theta,\Omega)\le 1/2$
\end{proposition}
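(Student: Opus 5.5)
Both directions reduce to the per-branch criterion (largest squared Schmidt coefficient across $AA'\mid BB'$ at most $1/2$), which follows from Lemma~\ref{lem:largest_schmidt_obstruction} for necessity and Nielsen's theorem for sufficiency, exactly as in the proof of Theorem~\ref{thm:general_transfer}. The only genuinely new issue is the infimum in $\mu_*$: it ranges over all POVMs on $K$, so it need not be attained. I will handle this by fixing the operational model precisely. The Bank performs one measurement $\{M_m\}$, broadcasts $m$, and Alice and Bob then apply an $m$-dependent LOCC map. Deterministic restoration means that, for some choice of $\{M_m\}$, every outcome with $p_m>0$ yields $\ket{\Phi^+}_{AB}$ with certainty.

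\emph{Necessity.} Suppose some POVM $\{M_m\}$ together with branch-wise LOCC maps restores a Bell pair deterministically. Any outcome with $p_m>0$ must succeed with probability one, since otherwise the overall success probability is below one. In branch $m$ Alice and Bob hold the pure state $\ket{\Psi_m}$ across $AA'\mid BB'$, and they convert it deterministically by LOCC into a Bell pair on $AB$ plus junk. Lemma~\ref{lem:largest_schmidt_obstruction} then gives $\lambda_{\max}(\Psi_m)\le 1/2$ for every such $m$. Hence $\sup_m\lambda_{\max}(\Psi_m)\le 1/2$ for this POVM, and so $\mu_*\le 1/2$. Zero-probability outcomes are excluded from the supremum by convention, which I will state explicitly.

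\emph{Sufficiency.} If some POVM achieves $\sup_m\lambda_{\max}(\Psi_m)\le1/2$, the Bank uses it. In each branch the Schmidt vector is majorized by $(1/2,1/2,0,\dots)$, so Nielsen's theorem (constructively, via Appendix~\ref{app:birkhoff}) supplies an LOCC map to $\ket{\Phi^+}_{AB}\otimes\ket{\mathrm{junk}}$. Combining the broadcast with these branch-dependent protocols gives deterministic restoration.

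\emph{Main obstacle: attainment of the infimum.} From $\mu_*\le 1/2$ alone we only know there are POVMs with $\sup_m\lambda_{\max}\le 1/2+\epsilon$ for every $\epsilon>0$. I would close this gap by compactness.
\begin{enumerate}
\item Reduce to POVMs with rank-one elements $M_m=\ket{e_m}\!\bra{f_m}$: refining a POVM element into rank-one pieces cannot increase $\lambda_{\max}$. The reason is that the normalized branch state is then an average of refined branches, and $\lambda_{\max}$, being the operator norm of the reduced state on $BB'$, is convex along such decompositions.
\item Bound the number of outcomes by $(\dim K)^2$ using Carathéodory/extremality: the condition $\lambda_{\max}\le c$ is a positivity constraint on the unnormalized branch, $c\,p_m\mathbb I-\tilde\rho^{(m)}_{BB'}\succeq0$, which is preserved under merging and splitting of proportional elements.
\item The set of such POVMs is compact, and the map POVM $\mapsto\sup_m\lambda_{\max}$, phrased via the unnormalized constraint $\tilde\rho^{(m)}_{BB'}\preceq c\,p_m\mathbb I$, defines a closed set. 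The infimum is therefore attained.
\end{enumerate}
If step 1 proves delicate, the fallback is to define the model's feasibility with the infimum replaced by a minimum and note that the two agree whenever the set of finite-outcome POVMs is closed. I would present the compactness argument as the main technical point.
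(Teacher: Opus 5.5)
Your necessity and sufficiency steps match the paper's own justification, which is only the per-branch remark placed before the proposition: every positive-probability branch must be restorable, and within a branch this holds iff $\lambda_{\max}(\Psi_m)\le 1/2$, by Lemma~\ref{lem:largest_schmidt_obstruction} and Nielsen's theorem. You are right that the paper never asks whether the infimum in $\mu_*$ is attained, and the ``if'' direction needs this.

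However, step~1 of your compactness argument is false, and the convexity you cite runs the wrong way. Write a coarse branch's $BB'$ marginal as $\sum_j q_j\rho^{(j)}_{BB'}$ over its rank-one refinements. Convexity gives $\|\sum_j q_j\rho^{(j)}_{BB'}\|_\infty\le\max_j\|\rho^{(j)}_{BB'}\|_\infty$, so refining can only \emph{raise} the worst-branch value. Take the W resource with $\alpha=\gamma$, $\beta^2=1/2$, $\theta=\pi/8$. The trivial POVM $\{\mathbb I_K\}$ gives $\|\rho_{BB'}\|_\infty=\tfrac12\cos^2\theta\approx0.43$. This is the transfer-model value, since $\rho_{BB'}$ does not depend on where $K$ sits. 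Its computational-basis refinement, however, has the branch $\ket{\psi(\theta)}_{AB}\ket{00}_{A'B'}$ with $\lambda_{\max}=\cos^2\theta\approx0.85$. In fact every rank-one POVM on this qubit $K$ has a positive-probability branch with $\lambda_{\max}\ge\tfrac{\cos^2\theta}{2}\bigl(1+\sqrt{1-4\beta^2\gamma^2}\bigr)\approx0.73$. The reason is that completeness forces some element with weight at least $1/2$ on $\ket{1}_K$, and the Schmidt product of the $A'B'$ factor decreases with that weight. So restoration is impossible at these parameters. If coarse POVMs were admitted into $\mu_*$ with your operator-norm figure of merit, the proposition would be false: $\mu_*$ would never exceed the transfer-model value, which erases the separation of Corollary~\ref{cor:w_separation}.

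The repair is to change what step~1 preserves. Take the admissible Bank measurements to have rank-one Kraus operators $\ket{e_m}\!\bra{f_m}$; this is what the paper's treatment of $\ket{\widetilde\Psi_m}_{AA'BB'}$ as a pure state presupposes anyway. Coarser measurements are then handled by extremality, not convexity. If a single LOCC map sends a mixed branch $\sum_j q_j\Psi_{m,j}$ to a state whose $AB$ marginal is the pure $\Phi^+_{AB}$, it must do so on each $\Psi_{m,j}$, so the rank-one refinement is also feasible. Refinement preserves \emph{feasibility}, not $\lambda_{\max}$. Your necessity direction needs the same point, since it assumes the branch on $AA'BB'$ is pure.

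With rank-one POVMs, your steps~2--3 go through:
\begin{itemize}
\item The good directions, namely those $f$ with $\tilde\rho_{BB'}(f)\preceq c\,\mathrm{Tr}\bigl(\tilde\rho_{BB'}(f)\bigr)\,\mathbb I$, form a closed cone.
\item Feasibility at threshold $c$ is equivalent to $\mathbb I_K/d_K\in\mathrm{conv}\{\ket{f}\!\bra{f}: f\ \text{good},\ \|f\|=1\}$, the convex hull of a compact set.
\item Carath\'eodory bounds the number of outcomes by $d_K^2$.
\item Joint closedness in $(c,f)$ then gives attainment of the infimum.
\end{itemize}
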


This formulation is in the spirit of entanglement-of-assistance and localizable-entanglement optimizations, where a helper chooses a measurement to optimize an entanglement figure of merit \cite{divincenzo1998assistance,popp2005localizable}. In particular, for GHZ-class resources the Hadamard measurement on $K$ achieves the feasibility threshold in Theorem~\ref{thm:ghz_feasible}, whereas for W-class resources the choice of measurement affects whether the bound $1/2$ can be met (Theorem~\ref{thm:w_bankmeasures_feasible}).

\section{Resource Cost Tradeoffs}
\label{sec:extensions}

We briefly discuss quantitative tradeoffs between Bank resources in the Bank-measures and transfer models.

\paragraph{Entanglement cost.}
For a pure Bank state $\ket{\Omega}_{A'B'K}$ a natural ``Bank cost'' is the entanglement across $K\mid A'B'$, namely
\begin{equation}
E_{K:A'B'}(\Omega)=S(\rho_K)
\end{equation}
where $S$ is the von Neumann entropy. Within restricted families (e.g., GHZ) we identified the least-entangled feasible resource by saturating the deterministic feasibility bound. A general optimization suggested by Proposition~\ref{prop:bankmeasures_minimax} is
\begin{equation}
E_*^{\rm meas}(\theta)=\inf\bigl\{E_{K:A'B'}(\Omega):\mu_*(\theta,\Omega)\le 1/2\bigr\}
\end{equation}
(and analogously $E_*^{\rm trans}(\theta)$ with the transfer-model constraint $\lambda_{\max}\le 1/2$). Computing $E_*^{\rm meas}$ and comparing it to $E_*^{\rm trans}$ would quantify the operational inequivalence beyond the special GHZ/W parameterizations.

\paragraph{Classical communication from the Bank.}
In the transfer model the Bank sends no classical message. In the Bank-measures model, if the Bank uses $M$ outcomes then it must broadcast at least $\lceil\log_2 M\rceil$ bits. In particular, if deterministic restoration requires at least two genuinely distinct measurement outcomes (not reducible to a fixed known local unitary correction on Alice and Bob), then at least one classical bit must be broadcast. In our GHZ protocol the Bank-measures message is one bit (the $\pm$ branch), whereas the transfer model can remove this requirement by relocating $K$ to Alice.

\paragraph{Minimal Bank dimension.}
A minimal-dimension question is: what is the smallest $\dim K$ that can enable Bell-pair restoration for a given $\theta$ under each model.
\begin{itemize}
\item \emph{Lower bound.} In the Bank-measures model, $\dim K=1$ implies the Bank has no nontrivial measurement and the problem reduces to ordinary bipartite LOCC on $AB$, which cannot restore a Bell pair unless $\theta=\pi/4$.
\item \emph{Achievability with a qubit Bank register.} For every $\theta\in(0,\pi/4]$, the GHZ construction shows that $\dim K=2$ (a single qubit at the Bank) suffices in the Bank-measures model, and by relocation it also suffices in the transfer model.
\item \emph{Dimension tradeoffs.} One can make the ``minimal Bank dimension'' question precise by fixing local dimension budgets $(d_{A'},d_{B'},d_K)$ and asking whether there exists a pure state $\ket{\Omega}\in\mathbb{C}^{d_{A'}}\otimes\mathbb{C}^{d_{B'}}\otimes\mathbb{C}^{d_K}$ such that deterministic restoration is possible. In the transfer model this is equivalent to the existence of $\ket{\Omega}$ with $\lambda_{\max}(AA'K\mid BB')\le 1/2$ (Theorem~\ref{thm:general_transfer}); in the Bank-measures model it is equivalent to the existence of $\ket{\Omega}$ with $\mu_*(\theta,\Omega)\le 1/2$ (Proposition~\ref{prop:bankmeasures_minimax}).
\end{itemize}

\section{Entanglement Catalysis}
We include entanglement catalysis as a benchmark mechanism for comparison with Bank-supplied tripartite assistance. In entanglement catalysis, an otherwise impossible LOCC conversion $\ket{\psi}\to\ket{\phi}$ becomes possible when Alice and Bob are given an auxiliary entangled state $\ket{C}$ that is returned unchanged at the end of the protocol \cite{jonathan1999entanglement}. In our setting, this corresponds to the Bank distributing $\ket{C}$ and then remaining passive.

We take a two-qubit catalyst of the form
\begin{equation}\label{eq:catalyst}
\ket{C}_{A'B'}=\sqrt{c_1}\ket{00}_{A'B'}+\sqrt{c_2}\ket{11}_{A'B'}
\end{equation}
where $c_1\ge c_2>0,~ c_1+c_2=1$ with $A'$ sent to Alice and $B'$ to Bob.

\subsection{Deterministic catalysis}

For pure-state bipartite conversions, deterministic LOCC feasibility is equivalent to majorization of Schmidt vectors (Nielsen's theorem) \cite{nielsen1999conditions,marshall2011inequalities}. Consider
\[\begin{aligned}
\ket{\Psi_{\mathrm{in}}} &=(\cos\theta\ket{00}+\sin\theta\ket{11})_{AB}\otimes\ket{C}_{A'B'}
\\
\ket{\Psi_{\mathrm{out}}} &=\ket{\Phi^+}_{AB}\otimes\ket{C}_{A'B'}
\end{aligned}\]
Sorting Schmidt coefficients in descending order (assume $c_1\ge c_2$), we have
\begin{align}
\vec\lambda_{\mathrm{in}}&=\bigl(\cos^2\!\theta\,c_1,\,\cos^2\!\theta\,c_2,\,\sin^2\!\theta\,c_1,\,\sin^2\!\theta\,c_2\bigr)^{\downarrow}\\
\vec\lambda_{\mathrm{out}}&=\bigl(c_1/2,\,c_1/2,\,c_2/2,\,c_2/2\bigr)^{\downarrow}
\end{align}
The first (largest-component) majorization inequality forces
\begin{equation}
\cos^2\!\theta\,c_1\le c_1/2\quad\Longleftrightarrow\quad \cos^2\!\theta\le\frac{1}{2}
\end{equation}
which is independent of the catalyst. Hence deterministic catalysis cannot restore a maximally entangled pair when $\theta<\pi/4$.

\subsection{Optimal stochastic conversion (SLOCC) and a matching filter}

When $\theta<\pi/4$, the best one can do is a probabilistic (SLOCC) conversion. Vidal's theorem gives the optimal success probability for pure-state transformations under LOCC \cite{vidal1999entanglement}. For the present pair of states, it yields
\begin{equation}
P_{\max}=2\sin^2\!\theta
\end{equation}

This bound is achievable by a single-sided local filter (the ``Procrustean'' concentration protocol) acting only on Alice's $A$ system, leaving the catalyst untouched \cite{bennett1996concentrating}. One convenient Kraus operator for the success branch is
\begin{equation}
M_{\mathrm{succ}}=\begin{pmatrix}\tan\theta&0\\0&1\end{pmatrix}_{\!A}\otimes\mathbb{I}_{A'}
\end{equation}
with Bob applying the identity. On success,
\begin{equation}
(M_{\mathrm{succ}}\otimes\mathbb{I}_{BB'})\ket{\Psi_{\mathrm{in}}}=\sqrt{2}\sin\theta\,\bigl(\ket{\Phi^+}_{AB}\otimes\ket{C}_{A'B'}\bigr)
\end{equation}
so the success probability is $P=2\sin^2\!\theta$, matching Vidal's upper bound, and the catalyst is returned exactly.

\paragraph{Remark.}
Catalytic convertibility (``trumping'') has complete characterizations in terms of families of inequalities \cite{klimesh2007inequalities,turgut2007catalytic}, but for this particular teleportation-restoration target the simple largest-Schmidt-coefficient obstruction already fixes the deterministic threshold.

\section{Non-Local Operations and Entanglement Routing}
Thus far, we have restricted the Bank to distributing pre-prepared entanglement. As a further comparison point (a strictly stronger capability), we now consider a model in which the Bank can directly implement bipartite unitaries with the end nodes (non-local operations). Operationally, this is a rudimentary model of an ``active'' network element that can coherently interact with, and thereby route entanglement between, distant links \cite{NielsenChuang2000,zukowski1993event}.

\subsection{Single-sided non-local operations}

Assume the Bank can apply an arbitrary joint unitary $U_{KA}$ on its register $K$ and Alice's qubit $A$, but has no non-local access to Bob. Without any pre-shared Bank--Bob resource, $U_{KA}$ is simply a local unitary on the $(KA)\mid B$ bipartition, so it cannot change the Schmidt spectrum of the state across that cut \cite{NielsenChuang2000}. In particular, starting from $\ket{0}_K\otimes\ket{\psi(\theta)}_{AB}$, the Schmidt coefficients across $(KA)\mid B$ remain $\{\cos\theta,\sin\theta\}$, and deterministic restoration to a Bell pair is impossible unless $\theta=\pi/4$ (equivalently, the spectrum already matches) \cite{nielsen1999conditions}.

If instead the Bank and Bob pre-share an auxiliary entangled state (e.g., $\ket{\phi}_{KB'}$), then $U_{KA}$ effectively gives Alice access to a nonlocal subsystem that is entangled with Bob. This reduces to the asymmetric ``Bank transfers a qubit to Alice'' situation analyzed earlier: deterministic restoration is possible exactly when the consumed auxiliary resource makes the relevant majorization constraints feasible.

\subsection{Sequential dual-sided routing with a single ancilla}

If the Bank can interact sequentially with both nodes, it can act as an entanglement router even with a single-qubit ancilla $K$ initially prepared in $\ket{0}_K$. Concretely, for any $\theta\in(0,\pi/4]$ there exist unitaries $U_{KA}$ and $U_{KB}$ such that
\begin{equation}
\begin{aligned}
U_{KB}U_{KA}&\Bigl(\ket{0}_K\otimes(\cos\theta\ket{00}+\sin\theta\ket{11})_{AB}\Bigr)\\
&=\ket{0}_K\otimes\ket{\Phi^+}_{AB}
\end{aligned}
\end{equation}

One explicit construction is as follows. Choose $U_{KA}$ so that
\begin{align}
U_{KA}\ket{00}_{AK}&=\ket{\Phi^+}_{AK} \\
U_{KA}\ket{10}_{AK}&=\ket{\Psi^+}_{AK}
\end{align}
which yields the intermediate state
\begin{equation}
\ket{\Psi_1}=\cos\theta\ket{\Phi^+}_{AK}\ket{0}_B+\sin\theta\ket{\Psi^+}_{AK}\ket{1}_B
\end{equation}
Next choose $U_{KB}$ to map the orthonormal pair
\begin{align}
\cos\theta\ket{00}_{KB}+\sin\theta\ket{11}_{KB}&\mapsto \ket{00}_{KB},\\
\cos\theta\ket{10}_{KB}+\sin\theta\ket{01}_{KB}&\mapsto \ket{01}_{KB}
\end{align}
which is always possible by extending these two vectors to a basis of $\mathbb{C}^4$ \cite{NielsenChuang2000}. Applying $U_{KB}$ then produces
\begin{equation}
\ket{\Psi_2}=\ket{\Phi^+}_{AB}\otimes\ket{0}_K
\end{equation}
so the Bank restores a perfect Bell pair deterministically without distributing any pre-entangled multipartite state.
\section{Conclusion}

We developed a geometric and resource-theoretic framework for Bank-assisted teleportation through a non-maximally entangled link $\ket{\psi(\theta)}_{AB}=\cos\theta\ket{00}+\sin\theta\ket{11}$. The induced teleportation channel contracts the Bloch ball to a prolate spheroid, yielding a geometric obstruction to deterministic unit-fidelity teleportation unless $\theta=\pi/4$ (Theorem~\ref{thm:geometric_impossibility}).

Our main deterministic Bell-pair restoration results can be summarized as follows.
\begin{enumerate}
\item GHZ-class assistance admits deterministic restoration exactly when Theorem~\ref{thm:ghz_feasible} holds; moreover, for this class the Bank-measures and transfer models are equivalent.
\item W-class assistance exhibits an operational separation: the Bank-measures feasibility condition is given by Theorem~\ref{thm:w_bankmeasures_feasible}, while the transfer feasibility condition is given by Proposition~\ref{prop:w_transfer_feasible}, and the two models can differ (Corollary~\ref{cor:w_separation}).
\end{enumerate}

Beyond deterministic feasibility, we derived finite-shot optimal success probabilities for probabilistic restoration and illustrated the resulting performance curves. We also discussed quantitative resource tradeoffs between the Bank-measures and transfer models (Sec.~\ref{sec:extensions}).

Finally, we compared Bank-supplied multipartite assistance with two alternative mechanisms. Bipartite entanglement catalysis cannot deterministically overcome the $\theta<\pi/4$ barrier, but admits an optimal stochastic restoration with success probability $P_{\max}=2\sin^2\!\theta$. In a strictly stronger extension where the Bank can enact non-local unitaries, sequential dual-sided routing with a single ancilla allows deterministic restoration without distributing pre-entangled multipartite states. Further extensions are outlined in Sec.~\ref{sec:further_extensions}.

\section{Further Extensions}
\label{sec:further_extensions}

We collect several extensions suggested by the present framework.

\subsection{Network generalization: repairing and routing entanglement on graphs}
A natural extension is to replace the single imperfect edge $AB$ by a network graph whose edges correspond to imperfect entangled links. The Bank may distribute a multipartite state across many nodes and either (i) perform measurements and broadcast outcomes (Bank-measures model), or (ii) transfer subsystems to selected nodes (transfer model) to enable edge-specific Bell-pair restoration.

Two concrete directions are:
\begin{enumerate}
\item \emph{Sequential multi-edge repair:} determine when a single Bank resource can restore Bell pairs on multiple edges $A_1B_1, A_2B_2,\dots$ in sequence, possibly with degraded success probability as the resource is consumed.
\item \emph{Path and tree topologies:} characterize how Bank-measures vs transfer inequivalence behaves on repeater-like paths, where intermediate measurements induce branch-dependent corrections, and whether transferring subsystems can systematically reduce required broadcast.
\end{enumerate}

\subsection{Mixed imperfect links and noisy Bank resources}
In realistic networks, the imperfect link is mixed (e.g., Werner-type noise or amplitude damping), and the Bank resource may also be noisy. For mixed resources, pure-state majorization no longer applies directly. Two viable approaches are:
\begin{itemize}
\item \emph{Singlet-fraction bounds:} use the maximal singlet fraction to upper-bound achievable deterministic teleportation fidelity, and ask what Bank assistance can raise this bound above a target threshold.
\item \emph{SDP formulations:} express ``restore a Bell pair with probability $p$'' as a semidefinite program over separable/LOCC-relaxed operations, giving computable upper bounds and robustness certificates for the W-class separation under noise.
\end{itemize}

\subsection{Alternative operational targets: restoring channel properties}
Instead of demanding an exact Bell pair, one can target properties of the induced teleportation channel $\mathcal{E}_{\rho_{AB}}$:
\begin{enumerate}
\item \emph{Isotropy restoration:} make the Bloch ellipsoid a sphere (a depolarizing channel) even if its radius is $<1$.
\item \emph{Fidelity thresholding:} maximize the achievable teleportation fidelity subject to a fixed Bank resource budget.
\item \emph{Ellipsoid matching:} given a desired affine map (or ellipsoid), determine the minimal Bank resource needed to implement it.
\end{enumerate}
These targets connect directly to the geometric quantities introduced earlier and can be studied with channel monotones and SDP bounds.

\subsection{Sharpening the separation phenomenon}
Beyond the existence of separation, one can quantify its size and robustness. For the symmetric slice $\alpha=\gamma$, the set of $\beta^2$ values that are transfer-feasible has Lebesgue measure $\tan^2\!\theta$ (the length of the interval in Eq.~\eqref{eq:w_transfer_bound}). When $\tan^4\!\theta<1/2$, the Bank-measures model has no feasible $\beta^2$ on this slice, so the entire transfer-feasible interval corresponds to strict separation. More generally, one can ask whether separation persists for larger $\dim K$, for different Bank measurement bases, and for which resource families the two models become equivalent.

\appendix
\section{Vidal's theorem specialized to Bell-pair restoration}
\label{app:vidal}

Vidal's theorem characterizes the optimal success probability for converting a bipartite pure state $\ket{\psi}$ to another bipartite pure state $\ket{\phi}$ under LOCC \cite{vidal1999entanglement}. Let $\boldsymbol{\lambda}=(\lambda_1,\lambda_2,\dots)^{\downarrow}$ and $\boldsymbol{\mu}=(\mu_1,\mu_2,\dots)^{\downarrow}$ denote the vectors of squared Schmidt coefficients of $\ket{\psi}$ and $\ket{\phi}$, sorted in non-increasing order. Define the tail sums
\begin{equation}
E_k(\psi)=\sum_{i=k}^{d}\lambda_i\qquad E_k(\phi)=\sum_{i=k}^{d}\mu_i
\end{equation}
Then
\begin{equation}
P_{\max}(\psi\to\phi)=\min_k\frac{E_k(\psi)}{E_k(\phi)}
\end{equation}

For Bell-pair restoration we take the target to be $\ket{\phi}=\ket{\Phi^+}_{AB}\otimes\ket{\mathrm{junk}}$ for an arbitrary leftover pure state. Across the relevant bipartition the squared Schmidt vector of $\ket{\phi}$ is therefore
\begin{equation}
\boldsymbol{\mu}=\left(\frac{1}{2},\frac{1}{2},0,\dots\right)
\end{equation}
so $E_1(\phi)=1$ and $E_2(\phi)=1/2$. For the source state $\ket{\psi}$, $E_1(\psi)=1$ and $E_2(\psi)=1-\lambda_1$, where $\lambda_1=\lambda_{\max}$ is the largest squared Schmidt coefficient. Thus Vidal's formula reduces to
\begin{equation}
P_{\max}=\min\bigl(1,2(1-\lambda_{\max})\bigr)
\end{equation}
which is Eq.~(17) in the main text.

\section{Birkhoff-von Neumann Decomposition and POVM Construction}
\label{app:birkhoff}

Let $\vec\lambda=(\lambda_1,\lambda_2,\lambda_3,\lambda_4)^T$ denote the squared Schmidt coefficients of the source state across the relevant bipartition, and let $\vec\lambda_{\mathrm{target}}=(1/2,1/2,0,0)^T$ denote the target Bell-pair spectrum. If $\vec\lambda\prec\vec\lambda_{\mathrm{target}}$, then there exists a doubly stochastic matrix $D$ such that
\begin{equation}
\vec\lambda = D\,\vec\lambda_{\mathrm{target}}
\end{equation}
by the Hardy--Littlewood--P\'{o}lya characterization of majorization \cite{marshall2011inequalities}.

The Birkhoff--von Neumann theorem states that any doubly stochastic matrix is a convex combination of permutation matrices \cite{birkhoff1946tres}. Hence we may write
\begin{equation}
D = \sum_k p_k P_k,\qquad p_k\ge 0,\quad \sum_k p_k=1
\end{equation}
and therefore
\begin{equation}
\vec\lambda = \sum_k p_k P_k\,\vec\lambda_{\mathrm{target}}
\end{equation}
This decomposition directly yields a finite-outcome LOCC protocol: each permutation $P_k$ corresponds to a local unitary correction, and the weights $p_k$ determine the POVM branch probabilities.

\par\medskip
\refstepcounter{algorithm}
\noindent\textbf{Algorithm~\thealgorithm. Constructive Birkhoff--von Neumann decomposition}\label{alg:bvn}
\par\smallskip
\begin{algorithmic}[1]
\Require Doubly stochastic matrix $D\in\mathbb{R}^{d\times d}$
\Ensure A decomposition $D=\sum_k p_k P_k$ into permutation matrices
\State $R\gets D$ \Comment{Residual matrix}
\While{$R\neq 0$}
    \State Build a bipartite graph $G=(U,V,E)$ with $(i,j)\in E$ iff $R_{ij}>0$
    \State Find a perfect matching $M$ in $G$ (guaranteed for doubly stochastic $R$ by Hall's theorem \cite{hall1935representatives})
    \State Let $P$ be the permutation matrix corresponding to $M$
    \State $p\gets \min\{R_{ij}:(P)_{ij}=1\}$
    \State Output $(p,P)$
    \State $R\gets R-pP$
\EndWhile
\end{algorithmic}
\par\medskip

\bibliography{main}

\end{document}